\def\Om{\Omega}
 \def\ua{\uparrow}
 \def\wh{\widehat}
 \def\wt{\widetilde}
\def\bbar{\overline}
\def\bR{\mathbb{R}}
\def\eps{\varepsilon}
\def\Om{\Omega}
\def\ignore#1{}
\def\cC{{\cal C}}
\def\cF{{\cal F}}
\def\cX{{\cal X}}
\def\<{\langle}
\def\>{\rangle}
\def\ua{\uparrow}
\def\cC{\mathscr C}
\def\bR{\mathbb R}
\def\cX{\mathscr X}
\def\cF{\mathscr F}
\def\bP{\mathbb P}
\def\bE{\mathbb E}
\def\bN{\mathbb N}
\def\bT{\mathbb T}
\def\eps{\varepsilon}
\def\Id{\,\text{\rm Id}}
\newtheorem{theorem}{Theorem}[section]
\newtheorem{proposition}[theorem]{Proposition}
\newtheorem{lemma}[theorem]{Lemma}
\theoremstyle{definition}
\newtheorem{definition}[theorem]{Definition}
\newtheorem{remark}[theorem]{Remark}
\def\wt{\widetilde}
\def\bbar{\overline}
 \newcommand{\1}{{\rm 1\hspace*{-0.4ex}
\rule{0.1ex}{1.52ex}\hspace*{0.2ex}}}
\def\Ind#1{\1_{_{\scriptstyle #1}}}
\begin{document}
\title{{\LARGE A  market impact  game under transient price impact }}
\author{\normalsize Alexander Schied\footnote{Department of Statistics and Actuarial Science, University of Waterloo, and Department of Mathematics, University of Mannheim, Email: {\tt alex.schied@gmail.com}} \qquad Tao Zhang\setcounter{footnote}{3}\footnote{Department of Mathematics, University of Mannheim, Email: {\tt taozhang.de@gmail.com}\hfill\break
The authors gratefully acknowledge financial support by Deutsche Forschungsgemeinschaft (DFG) through Research Grants SCHI/3-1 and SCHI/3-2.}}

\date{}

\maketitle

\begin{abstract}We consider a Nash equilibrium between two high-frequency traders in a simple market impact model with transient price impact and additional quadratic transaction costs. Extending a result by Sch\"oneborn (2008), we prove existence and uniqueness of the Nash equilibrium and  show that for small transaction costs the high-frequency traders engage in a \lq\lq hot-potato game", in which the same asset position is sold back and forth. We then identify a critical value for the size  of the transaction costs above which all oscillations disappear and strategies become buy-only or sell-only. Numerical simulations show that for both traders  the expected costs can be lower with transaction costs than without. Moreover, the costs can increase with the trading frequency if there are no transaction costs, but decrease with the trading frequency if transaction costs are sufficiently high. We argue that these effects occur due to the need of protection against predatory trading in the regime of low transaction costs. 
\end{abstract}
 
 \medskip


\medskip\noindent{\bf Keywords:} Market impact game, high-frequency trading, Nash equilibrium, transient price impact, market impact, predatory trading, $M$-matrix, inverse-positive matrix, Kaluza sign criterion

\section{Introduction}\label{introsection}

According to the Report~\cite{SEC} by CFTC and SEC  on the Flash Crash of May 6, 2010, the events that lead to the Flash Crash included a large sell order of E-Mini S\&P 500 contracts:
\begin{quote}
 \dots a large Fundamental Seller (\dots) initiated a program to sell a total of 75,000 E-Mini contracts (valued at approximately \$4.1 billion). \dots [On another]  occasion it took more than 5 hours for this large trader to execute the first 75,000 contracts of a large sell program. However, on May 6, when markets were already under stress, the Sell Algorithm chosen by the large Fundamental Seller to only target trading volume, and not price nor time, executed the sell program extremely rapidly in just 20 minutes.
\end{quote}
The  report~\cite{SEC} furthermore suggests that a \lq\lq hot-potato game" between high-frequency traders (HFTs) created artificial trading volume that at least contributed to the acceleration of the Fundamental Seller's trading algorithm:
\begin{quote}
\dots\  HFTs began to quickly buy and then resell contracts to each other---generating a \lq\lq hot-potato" volume effect as the same positions were rapidly passed back and forth. Between 2:45:13 and 2:45:27, HFTs traded over 27,000 contracts, which accounted for about 49 percent of the total trading volume, while buying only about 200 additional contracts net.
\end{quote} 
See also Kirilenko, Kyle, Samadi, and Tuzun~\cite{Kirilenko} and Easley, L\'opez da Prado, and O'Hara~\cite{Easley2011} for additional background.  

Sch\"oneborn~\cite{Schoeneborn} observed that the equilibrium strategies of two competing economic agents, who trade sufficiently fast in a simple market impact model with exponential decay of price impact, can exhibit strong oscillations. These oscillations have a striking similarity with the \lq\lq
hot-potato game" mentioned in~\cite{SEC} and~\cite{Kirilenko}.   In each trading period, one agent sells a large asset position to the other agent and buys a similar position back in the next period. The intuitive reason for this hot-potato game is to protect against possible \emph{predatory trading} by the other agent. Here, predatory trading refers to the exploitation of the drift generated by  the price impact of another agent. For instance, if the other agent is selling assets over a certain time interval, predatory trading would consist in  shortening the asset at the beginning of  the time interval and buying back when prices have depreciated through the sale of the other   agent. Such strategies are \lq\lq predatory" in the sense that their price impact decreases the revenues of the other agent and thus generate profit at the other agent's expense.

In this paper, we continue the investigation of the \lq\lq
hot-potato game". Our first contribution is to extend the result of Sch\"oneborn~\cite{Schoeneborn} by identifying a unique Nash equilibrium for two competing agents within a larger class of adaptive trading strategies, for general decay kernels,  and by giving an explicit formula for the equilibrium strategies. This explicit formula  will be the starting point for our further mathematical and numerical analysis of the Nash equilibrium. 
Another new feature of our approach is the addition of  quadratic transaction costs, which can be thought of temporary price impact in the sense of~\cite{BertsimasLo,AlmgrenChriss2} or as a transaction tax. The main goal of our paper is to study the impact of these additional transaction costs on equilibrium strategies. 
Theorem~\ref{nonosc thm}, our main result, precisely identifies a critical threshold $\theta^*$ for the size $\theta$ of these transaction costs at which all oscillations disappear. That is, for transactions $\theta\ge \theta^*$ certain  \lq\lq fundamental" equilibrium strategies consist exclusive of all buy trades or of all sell trades. For $\theta<\theta^*$, the \lq\lq fundamental" equilibrium strategies will contain both buy and sell trades when the decay of price impact in between two trades is sufficiently small.

In addition, numerical simulations will exhibit some rather striking properties of equilibrium strategies. They reveal, for instance, that the expected costs of both agents can be a \emph{decreasing} function of $\theta\in[0,\theta_0]$  when trading speed is sufficiently high. As a result, both agents can carry out their respective trades at a \emph{lower cost} when there are transaction costs, compared to the situation without transaction costs. Even more interesting is the behavior of the costs as a function of the trading frequency. We will see that,  for $\theta=\theta^*$, a higher trading speed can  \emph{decrease} expected trading costs, whereas the costs typically \emph{increase} for sufficiently small $\theta$. In particular the latter effect is surprising, because at first glance a higher trading frequency suggests that one has greater flexibility  in the choice of a strategy  and hence can become more cost efficient. So why are the costs then increasing in the trading frequency? 
We will argue that the intuitive reason for this effect is that a higher trading frequency results in greater possibilities for predatory trading by the competitor and thus requires taking additional measures of protection against predatory trading.  Some of these numerical observations have meanwhile been derived mathematically in our follow-up paper~\cite{SchiedStrehleZhang}, which has E.~Strehle as additional coauthor.

This paper builds on several research developments in the existing literature. First, there are several papers on predatory trading  such as Brunnermeier and Pedersen~\cite{BrunnermeierPedersen}, Carlin et al.~\cite{Carlinetal},  Sch\"oneborn and Schied~\cite{SchoenebornSchied}, and  the authors~\cite{SchiedZhangCARA} dealing with Nash equilibria for several agents that are active in a market model with temporary and permanent price impact.  A discrete-time market impact game with asymmetric information was analyzed by Moallemi et al.~\cite{Moallemietal}. In contrast to these previous studies, the transient price impact model we use here goes back to Bouchaud et al.~\cite{Bouchaudetal} and Obizhaeva and Wang~\cite{ow}. It was further developed in~\cite{AFS1,AFS2, Gatheral, ASS, PredoiuShaikhetShreve}, to mention only a few related papers. As first observed in~\cite{Schoeneborn}, the qualitative features of Nash equilibria for transient price impact differ dramatically from those obtained in~\cite{Carlinetal,SchoenebornSchied,SchiedZhangCARA}  for an Almgren--Chriss setting.   We refer to~\cite{GatheralSchiedSurvey,Lehalle} for  recent surveys on the price impact literature and extended bibliographies. Among the utilized mathematical tools are the theory of $M$-matrices~\cite{BermanPlemmons}, the correspondence between the inverses of triangular Toeplitz matrices and reciprocals of power series~\cite{Trench3}, and Kaluza's sign criterion for reciprocal power series~\cite{Kaluza,Szego}.

The paper is organized as follows. In Section~\ref{modeling section} we explain our modeling framework. The existence and uniqueness theorem for  Nash equilibria is stated in Section~\ref{Nash equilibrium section}. In Section~\ref{hot potato section} we analyze the oscillatory behavior of equilibrium strategies. Here we will also state our main result, Theorem~\ref{nonosc thm}, on the critical threshold for the disappearance of oscillations.  Our numerical results and their interpretation are presented in Section~\ref{hot potato section} and Section~\ref{transaction tax section}. Particularly,   Section~\ref{transaction tax section} contains the simulations for the behavior of the costs as a function of transaction costs and of trading frequency in the cases with and without transaction costs. 
The proofs of our results are given in Section~\ref{Proof Section}. We conclude in Section~\ref{Conclusion section}.

\section{Statement of results}\label{Results Section}

\subsection{Modeling framework}\label{modeling section}
We consider two financial agents, $X$ and $Y$, who are active in a  market impact model for one risky asset. Market impact will be transient and modeled  as in~\cite{ASS}; see also~\cite{Bouchaudetal,ow,AFS2,Gatheral,PredoiuShaikhetShreve} for closely related or earlier versions of this model, which is sometimes called a propagator model.
When none of the two agents is active, asset prices are described by a right-continuous martingale\footnote{The martingale assumption is natural from an economic point of view, because we are interested here in high-frequency trading over short time intervals $[0,T]$. See also the  discussions in~\cite{ASS,KSS} for additional arguments.} $S^0=(S^0_t)_{t\ge0}$ on a filtered probability space $(\Om,(\cF_t)_{t\ge0},\cF,\bP)$, for which $\cF_0$ is $\bP$-trivial. The process $S^0$ is often called the unaffected price process. 
Trading takes place at the discrete trading times of a \emph{time grid} $\bT =\{t_0,t_1,\dots, t_N\}$, where $0=t_0<t_1<\cdots<t_N= T$. Both agents are assumed to use trading strategies that are admissible in the following sense.

\begin{definition}\label{Strategies def}Suppose that a time grid $\bT =\{t_0,t_1,\dots, t_N\}$ is given. An \emph{admissible trading strategy} for $\bT$ and $Z_0\in\bR$ is a vector $\bm  \zeta=(\zeta_0,\dots,\zeta_N)$ of random variables such that 
\begin{enumerate}
\item each $\zeta_i$ is $\cF_{t_i}$-measurable and bounded, and 
\item $\displaystyle Z_0=\zeta_0+\cdots+\zeta_N$ $\bP$-a.s.
\end{enumerate}
The set of all admissible strategies for given $\bT$ and $Z_0$ is denoted by $\cX(Z_0,\bT)$.
\end{definition} 

For $\bm \zeta\in\cX(Z_0,\bT)$, the value of $\zeta_i$ is taken as the number of shares traded  at time $t_i$, with a positive sign indicating a sell order and a negative sign indicating a purchase. 
Thus, the requirement (b) in the preceding definition 
can be interpreted by saying that $Z_0$ is the inventory of the agent at time $0=t_0$ and that by time $t_N=T$ (e.g., the end of the trading day) the agent must have a zero inventory. The assumption that each $\zeta_i$ is bounded can  be made without loss of generality from an economic point of view. 

When the two agents $X$ and $Y$ apply respective strategies $\bm \xi\in\cX(X_0,\bT)$ and $\bm \eta\in\cX(Y_0,\bT)$,  the asset price is given by 
\begin{equation}\label{price process discrete}
S^{\bm \xi,\bm \eta}_t=S^0_t-\sum_{t_k<t}G(t-t_k)(\xi_k+\eta_k),
\end{equation}
where  $G:\bR_+\to\bR_+$ is a function called the \emph{decay kernel}.
Thus, at each time $t_k\in\bT$, the combined trading activities of the two agents move the current price by the amount $-G(0)(\xi_k+\eta_k)$. At a later time $t>t_k$, this price impact will have changed to $-G(t-t_k)(\xi_k+\eta_k)$. From an economic point of view it would be reasonable to assume that $G$ is nonincreasing, but this assumption is not essential for our results to hold mathematically. But we do assume throughout this paper that  the function $t\mapsto G(|t|)$ is strictly positive definite in the sense of Bochner: For all $n\in\bN$, $t_1,\dots,t_n\in\bR$, and $x_1,\dots,x_n\in\bR$ we have
\begin{align}\label{G strictly positive definite}
\sum_{i,j=1}^nx_ix_jG(|t_i-t_j|)\ge0,\quad\text{with equality  if and only if $x_1=\dots=x_n=0$.}
\end{align}
As observed in~\cite{ASS}, this assumption rules out the existence of price manipulation strategies in the sense of Huberman and Stanzl~\cite{HubermanStanzl}. It is satisfied as soon as $G$ is convex, nonincreasing, and nonconstant; see, e.g.,~\cite[Proposition 2]{ASS} for a proof.

Let us now discuss the definition of the liquidation costs incurred by each agent.  When only one agent, say $X$, places a nonzero order at time $t_k$, then we are in the situation of~\cite{ASS} and the price is moved linearly from $S^{\bm \xi,\bm \eta}_{t_k}$ to $S^{\bm \xi,\bm \eta}_{t_k+}:=S^{\bm \xi,\bm \eta}_{t_k}-{G(0)}\xi_k$.  The order  $\xi_k$ is therefore executed at the average price $\frac12(S^{\bm \xi,\bm \eta}_{t_k+}+S^{\bm \xi,\bm \eta}_{t_k})$ and consequently incurs the following expenses:
$$-\frac{1}2\big(S^{\bm \xi,\bm \eta}_{t_k+}+S^{\bm \xi,\bm \eta}_{t_k}\big)\xi_k=\frac{G(0)}2\xi_k^2-S_{t_k}^{\bm \xi,\bm \eta}\xi_k.
$$
Suppose now that the order $\eta_k$ of agent $Y$ is executed immediately after the order $\xi_k$. Then the price is moved linearly from $S^{\bm \xi,\bm \eta}_{t_k+}$ to $S^{\bm \xi,\bm \eta}_{t_k+}-{G(0)}\eta_k$, and the order of agent $Y$ incurs the expenses
$$-\frac12\big(S^{\bm \xi,\bm \eta}_{t_k+}+S^{\bm \xi,\bm \eta}_{t_k+}-{G(0)}\eta_k\big)\eta_k=\frac{G(0)}2\eta_k^2-S_{t_k}^{\bm \xi,\bm \eta}\eta_k+{G(0)}\xi_k\eta_k.
$$
So greater latency results in the additional cost term ${G(0)}\xi_k\eta_k$ for agent $Y$. Clearly, this term  appears in the expenses of agent $X$, if the roles  of $X$ and $Y$ are reversed. In the sequel, we are going to assume that none of the two agents has an advantage in latency over the other. Therefore, if both agents place nonzero orders at time $t_k$, execution priority is given to that agent who wins an independent coin toss.

In addition to the liquidation costs motivated above, we will also impose that each trade $\zeta_k$ incurs quadratic transaction costs of the form
$\theta \zeta_k^2$, where $\theta$ is a nonnegative parameter.   The assumption of quadratic transaction costs will be discussed at the end of Section~\ref{Nash equilibrium section}, after the statement of Theorem~\ref{th1}.

\begin{definition}\label{discrete strategies cost def}Suppose that $\bT =\{t_0,t_1,\dots, t_N\}$,  $X_0$ and $Y_0$ are given.  
Let furthermore $(\eps_i)_{i=0,1,\dots}$ be an i.i.d. sequence of Bernoulli\,$(\frac12)$-distributed random variables that are independent of $\sigma(\bigcup_{t\ge0}\cF_t)$. Then the \emph{costs of $\bm \xi\in\cX(X_0,\bT)$ given $\bm \eta\in\cX(Y_0,\bT)$} are defined as
\begin{equation}\label{costs def eq}
\cC_\bT(\bm \xi|\bm \eta)=X_0S_0^0+\sum_{k=0}^N\Big(\frac{G(0)}2\xi_k^2-S_{t_k}^{\bm \xi,\bm \eta}\xi_k+\eps_k{G(0)}\xi_k\eta_k+\theta\xi_k^2\Big)
\end{equation}
and the \emph{costs of $\bm \eta$ given $\bm \xi$} are
$$\cC_\bT(\bm \eta|\bm \xi)=Y_0S_0^0+\sum_{k=0}^N\Big(\frac{G(0)}2\eta_k^2-S_{t_k}^{\bm \xi,\bm \eta}\eta_k+(1-\eps_k){G(0)}\xi_k\eta_k+\theta\eta_k^2\Big).
$$
\end{definition}

The term $X_0S_0^0$ corresponds to the book value of the position $X_0$ at time $t=0$. If the position $X_0$ could be liquidated at book value, one would  incur the expenses $-X_0S_0^0$.  Therefore, the liquidation costs as defined in \eqref{costs def eq}  are the difference of the actual accumulated expenses, as represented  by the sum on the right-hand side of \eqref{costs def eq}, and the expenses for liquidation at book value. The following  remark provides further comments on our modeling assumptions.

\begin{remark}\label{bid-ask-rem}   The market impact model we are using here has often been linked to the placement of market orders in a block-shaped limit order book, 
and a bid-ask spread is sometimes added to the model so as to make this interpretation more feasible~\cite{ow,AFS1}. For  a strategy consisting exclusively of market orders, the bid-ask spread will lead to an additional fee that should be reflected in the corresponding cost functional. 
  In reality, however, most strategies will involve  a variety of different order types and one should think of the costs \eqref{costs def eq} as the costs \emph{averaged} over   order types, as is often done in the market impact literature. For instance, while one may have to pay the spread when placing a market   order, one essentially earns it back when a limit  order is executed. Moreover, high-frequency traders often have access to a variety of more exotic order types, some of which can  pay rebates when executed. It is also possible to use crossing networks or dark pools in which orders are executed at mid price. So, for a setup of  high-frequency trading, taking the bid-ask spread as zero in \eqref{price process discrete} is probably  more realistic than modeling every single order as a market order and to impose the fees. The existence of hot-potato games in real-world markets, such as the one quoted from~\cite{SEC} in Section~\ref{introsection}, can be regarded as an empirical justification of the zero-spread assumption, because such a trading behavior could never be profitable if each trader had to pay the full spread upon each execution of an order.   See also the end of Section~\ref{Nash equilibrium section} for a
  discussion on how to replace our quadratic transaction costs by piecewise linear ones.\end{remark}

\subsection{Nash equilibrium}\label{Nash equilibrium section}

We now consider agents who need to liquidate their current  inventory within a given time frame  and who are aiming to minimize the expected  costs over admissible strategies. The need for liquidation can arise due to various reasons. For instance, Easley, L\'opez da Prado, and O'Hara~\cite{Easley2011} argue that the toxicity of the order flow preceding the Flash Crash of May 6, 2010, has led the inventories of several high-frequency market makers to grow beyond their risk limits, thus forcing  them to unload their  inventories.

When just a single agent is considered, the minimization of the expected execution costs  is  a well-studied problem; we refer to~\cite{ASS} for an analysis within our current modeling framework. Here we are going to investigate the optimal strategies of our two agents, $X$ and $Y$, under the assumption that both have full knowledge of the other's strategy and maximize the expected costs of their strategies accordingly. In this situation, it is natural to define optimality through the following notion of a Nash equilibrium.

\begin{definition}\label{Nash Def}For given time grid $\bT$ and initial values $X_0$, $Y_0\in\bR$, a \emph{Nash equilibrium} is  a pair $(\bm \xi^*,\bm \eta^*)$ of strategies in $\cX(X_0,\bT)\times\cX(Y_0,\bT)$ such that 
$$\bE[\,\cC_\bT(\bm \xi^*|\bm \eta^*)\,]=\min_{\bm \xi\in\cX(X_0,\bT)}\bE[\,\cC_\bT(\bm \xi|\bm \eta^*)\,]\qquad\text{and}\qquad \bE[\,\cC_\bT(\bm \eta^*|\bm \xi^*)\,]= \min_{\bm \eta\in\cX(Y_0,\bT)}\bE[\,\cC_\bT(\bm \eta|\bm \xi^*)\,].
$$
\end{definition}

To state our formula for this Nash equilibrium, 
we need to introduce the following notation. For a fixed time grid $\bT=\{t_0,\dots, t_N\}$, we define the $(N+1)\times(N+1)$-matrix ${\Gamma}$ by 
\begin{equation}\label{G matrix eq}
{\Gamma}_{i,j}=G(|t_{i-1}-t_{j-1}|),\qquad i,j=1,\dots, N+1,
\end{equation}
and for $\theta\ge0$ we introduce
\begin{align}\label{Gamma theta}
\Gamma_\theta:=\Gamma+2\theta\Id.
\end{align}
 We furthermore define the lower triangular matrix $\wt {\Gamma}$  by 
\begin{align}\label{wt Gamma def}
\wt {\Gamma}_{ij}=\begin{cases}{\Gamma}_{ij}&\text{if $i>j$,}\\
\frac12G(0)&\text{if $i=j$,}\\
0&\text{otherwise.}
\end{cases}
\end{align}
Note that $\Gamma=\wt\Gamma+\wt\Gamma^\top$, where $\top$ denotes the transpose of a matrix or vector.
We will write ${\bm 1}$ for the vector $(1,\dots, 1)^\top\in\bR^{N+1}$. A strategy $\bm\zeta=(\zeta_0,\dots,\zeta_N)\in\cX(Z_0,\bT)$ will be identified with the ${(N+1)}$-dimensional random vector $(\zeta_0,\dots,\zeta_N)^\top$. Conversely, any vector $\bm z=(z_1,\dots, z_{N+1})^\top \in\bR^{N+1}$ can be identified with the deterministic strategy $\bm\zeta$ with $\zeta_k=z_{k+1}$.
We also define the two vectors
\begin{equation}\label{v and w def}\begin{split}
\bm v&=\frac{1}{{\bm 1}^\top(  {{\Gamma_\theta}}+  \wt{{{\Gamma}}})^{-1}{\bm 1}}({{\Gamma_\theta}}+  \wt{{{\Gamma}}})^{-1}{\bm 1}\\
\bm w&=\frac{1}{{\bm 1}^\top(  {{\Gamma_\theta}}-  \wt{{{\Gamma}}})^{-1}{\bm 1}}(  {{\Gamma_\theta}}-  \wt{{{\Gamma}}})^{-1}{\bm 1}.
\end{split}
\end{equation}
It will be shown in Lemma~\ref{positive definite lemma} below that the  matrices ${{\Gamma_\theta}}+  \wt{{{\Gamma}}}$ and ${{\Gamma_\theta}}-  \wt{{{\Gamma}}}$ are indeed invertible and that the denominators in \eqref{v and w def} are strictly positive under our assumption   \eqref{G strictly positive definite} that $G(|\cdot|)$ is strictly positive definite. Recall that we assume  \eqref{G strictly positive definite} throughout this paper.

In the case $G(t)=\gamma+\lambda e^{-\rho t}$ for constants $\gamma\ge0$ and $\lambda,\rho>0$, the existence of a unique Nash equilibrium in the class of deterministic strategies was established in Theorem 9.1 of~\cite{Schoeneborn}. Our subsequent Theorem~\ref{th1}   extends this result in a number of ways: we  allow for general positive definite decay kernels,  include transaction costs,  give an explicit form of the deterministic Nash equilibrium, and show that this Nash equilibrium is also the unique Nash equilibrium in the class of adapted strategies. Our explicit formula for the equilibrium strategies will be the starting point for our further mathematical and numerical analysis of the Nash equilibrium. Also our proof is different from the one in~\cite{Schoeneborn}, which  works only for the specific decay kernel $G(t)=\lambda e^{-\rho t}+\gamma$.

\begin{theorem}\label{th1}For any  strictly positive definite decay kernel $G$, time grid $\bT$, parameter $\theta\ge0$, and initial values $X_0$, $Y_0\in\bR$, there exists a unique  Nash equilibrium $(\bm \xi^*,\bm \eta^*)\in\cX(X_0,\bT)\times\cX(Y_0,\bT)$. The optimal strategies $\bm \xi^*$ and $\bm \eta^*$ are deterministic and given by
\begin{equation}\label{equilibrium strategies}
\begin{split}
\bm \xi^*&=\frac12(X_0+Y_0)\bm v+\frac12(X_0-Y_0)\bm w,\\
\bm \eta^*&=\frac12(X_0+Y_0)\bm v-\frac12(X_0-Y_0)\bm w.
\end{split}
\end{equation}
\end{theorem}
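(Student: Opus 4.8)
\smallskip

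\noindent\emph{Step 1 (reduction to a finite quadratic game).} First I would rewrite the expected costs as a quadratic form in the vector of trades. Substituting \eqref{price process discrete} into \eqref{costs def eq}, the terms containing the unaffected price are $X_0S^0_0-\sum_kS^0_{t_k}\xi_k$; writing the residual position $R_k:=X_0-\sum_{j<k}\xi_j$, which is $\cF_{t_{k-1}}$-measurable and bounded, summation by parts together with the martingale property of $S^0$ and the triviality of $\cF_0$ gives $\bE[\sum_kS^0_{t_k}\xi_k]=X_0S^0_0$, so these contributions cancel in expectation. Since each $\eps_k$ is independent of $\sigma(\bigcup_t\cF_t)$ one has $\bE[\eps_k\xi_k\eta_k]=\tfrac12\bE[\xi_k\eta_k]$, and collecting the quadratic and the lagged cross terms into $\Gamma$, $\Gamma_\theta$ and $\wt\Gamma$ (using $\Gamma=\wt\Gamma+\wt\Gamma^\top$) should yield
\[
\bE[\cC_\bT(\bm\xi|\bm\eta)]=\bE\big[\tfrac12\bm\xi^\top\Gamma_\theta\bm\xi+\bm\xi^\top\wt\Gamma\bm\eta\big],\qquad \bE[\cC_\bT(\bm\eta|\bm\xi)]=\bE\big[\tfrac12\bm\eta^\top\Gamma_\theta\bm\eta+\bm\eta^\top\wt\Gamma\bm\xi\big].
\]
By \eqref{G strictly positive definite} the matrix $\Gamma$ is strictly positive definite, hence so is $\Gamma_\theta=\Gamma+2\theta\Id$ for every $\theta\ge0$, so $\bm\xi\mapsto\bE[\cC_\bT(\bm\xi|\bm\eta)]$ is strictly convex on $\cX(X_0,\bT)$ and the best response is unique. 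Moreover, when $\bm\eta$ is \emph{deterministic}, Jensen's inequality applied to the strictly convex map $\bm\xi\mapsto\bm\xi^\top\Gamma_\theta\bm\xi$ gives $\bE[\cC_\bT(\bm\xi|\bm\eta)]\ge\bE[\cC_\bT(\bE[\bm\xi]\,|\,\bm\eta)]$, with equality only for deterministic $\bm\xi$; hence the best response to a deterministic strategy is deterministic.

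\smallskip

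\noindent\emph{Step 2 (the deterministic equilibrium).} Over deterministic strategies the game is a pair of strictly convex quadratic programs, each with one linear constraint (${\bm 1}^\top\bm x=X_0$, resp.\ ${\bm 1}^\top\bm y=Y_0$). A deterministic pair $(\bm x^*,\bm y^*)$ is an equilibrium iff the Lagrange conditions $\Gamma_\theta\bm x^*+\wt\Gamma\bm y^*=\lambda{\bm 1}$ and $\Gamma_\theta\bm y^*+\wt\Gamma\bm x^*=\mu{\bm 1}$ hold for some scalars $\lambda,\mu$, together with the constraints. I would add and subtract these gradient identities to get $(\Gamma_\theta+\wt\Gamma)(\bm x^*+\bm y^*)=(\lambda+\mu){\bm 1}$ and $(\Gamma_\theta-\wt\Gamma)(\bm x^*-\bm y^*)=(\lambda-\mu){\bm 1}$; by Lemma~\ref{positive definite lemma} the matrices $\Gamma_\theta\pm\wt\Gamma$ are invertible with ${\bm 1}^\top(\Gamma_\theta\pm\wt\Gamma)^{-1}{\bm 1}>0$, so $\bm x^*\pm\bm y^*$ are determined once the scalars are fixed by ${\bm 1}^\top(\bm x^*+\bm y^*)=X_0+Y_0$ and ${\bm 1}^\top(\bm x^*-\bm y^*)=X_0-Y_0$. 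This produces precisely $\bm x^*+\bm y^*=(X_0+Y_0)\bm v$ and $\bm x^*-\bm y^*=(X_0-Y_0)\bm w$, i.e.\ \eqref{equilibrium strategies}. Conversely, running the computation backwards I would check that the pair in \eqref{equilibrium strategies} satisfies the Lagrange conditions and the constraints (note ${\bm 1}^\top\bm v={\bm 1}^\top\bm w=1$); by strict convexity this makes each component the best response to the other among deterministic strategies, and by Step 1 also among all adapted strategies. Hence it is a Nash equilibrium, and the unique deterministic one.

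\smallskip

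\noindent\emph{Step 3 (every equilibrium is deterministic).} Finally I would take any Nash equilibrium $(\bm\xi^*,\bm\eta^*)$ and put $\bar{\bm\xi}:=\bE[\bm\xi^*]$, $\bar{\bm\eta}:=\bE[\bm\eta^*]$ (admissible), $\bm\delta:=\bm\xi^*-\bar{\bm\xi}$, $\bm\epsilon:=\bm\eta^*-\bar{\bm\eta}$. Expanding and using $\bE[\bm\delta]=\bE[\bm\epsilon]=0$ gives
\[
\bE[\cC_\bT(\bm\xi^*|\bm\eta^*)]-\bE[\cC_\bT(\bar{\bm\xi}|\bm\eta^*)]=\tfrac12\bE[\bm\delta^\top\Gamma_\theta\bm\delta]+\bE[\bm\delta^\top\wt\Gamma\bm\epsilon]\le0,
\]
the inequality because $\bm\xi^*$ best-responds to $\bm\eta^*$; symmetrically $\tfrac12\bE[\bm\epsilon^\top\Gamma_\theta\bm\epsilon]+\bE[\bm\epsilon^\top\wt\Gamma\bm\delta]\le0$. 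Adding the two and using $\wt\Gamma+\wt\Gamma^\top=\Gamma$, $\Gamma_\theta=\Gamma+2\theta\Id$,
\[
\tfrac12\bE\big[(\bm\delta+\bm\epsilon)^\top\Gamma(\bm\delta+\bm\epsilon)\big]+\theta\,\bE\big[\,\|\bm\delta\|^2+\|\bm\epsilon\|^2\,\big]\le0 .
\]
Both summands are nonnegative, so both vanish, and strict positive definiteness of $\Gamma$ forces $\bm\epsilon=-\bm\delta$ a.s. Substituting $\bm\epsilon=-\bm\delta$ back into the first inequality and using $\bm\delta^\top\wt\Gamma\bm\delta=\tfrac12\bm\delta^\top\Gamma\bm\delta$, its left-hand side equals $\theta\,\bE[\|\bm\delta\|^2]\ge0$; therefore that inequality is an equality, i.e.\ $\bar{\bm\xi}$ is also a best response of $X$ to $\bm\eta^*$. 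By uniqueness of the best response (Step 1), $\bm\xi^*=\bar{\bm\xi}$ a.s., and likewise $\bm\eta^*=\bar{\bm\eta}$ a.s.; Step 2 then identifies this pair as \eqref{equilibrium strategies}.

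\smallskip

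\noindent\emph{Main obstacle.} I expect Step 3 to be the crux: the Jensen reduction of Step 1 only helps against a deterministic opponent, so it cannot be applied to $\bm\eta^*$ directly, and the symmetrization that converts $\bm\delta^\top\wt\Gamma\bm\delta$ into $\tfrac12\bm\delta^\top\Gamma\bm\delta$ is what breaks the circularity. A further subtlety is that the transaction-cost term $\theta\|\cdot\|^2$ disappears when $\theta=0$, so $\bm\delta=0$ cannot be read off from coercivity of that term alone; here it is essential that $G(|\cdot|)$ is \emph{strictly} positive definite, making $\Gamma$ itself — not merely $\Gamma_\theta$ — positive definite, which keeps both the uniqueness of the best response and the final identification valid for all $\theta\ge0$.
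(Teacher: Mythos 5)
Your proof is correct, and its overall skeleton matches the paper's: reduce to the quadratic form $\bE[\cC_\bT(\bm\xi|\bm\eta)]=\bE[\tfrac12\bm\xi^\top\Gamma_\theta\bm\xi+\bm\xi^\top\wt\Gamma\bm\eta]$, derive the candidate via Lagrange conditions and the sum/difference decomposition using $\Gamma_\theta\pm\wt\Gamma$, and use Jensen's inequality to pass from deterministic to adapted opponents. The place where you genuinely diverge is the uniqueness argument. The paper (its Lemma on uniqueness) assumes two distinct adapted equilibria, interpolates both pairs with a parameter $\alpha\in[0,1]$, forms a scalar function $f(\alpha)$ that sums the four relevant expected-cost functionals, notes $f$ is strictly convex with a minimum at $\alpha=0$ so $f'(0+)\ge0$, and then computes $f'(0+)$ explicitly to reach a contradiction. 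Your Step~3 instead projects any equilibrium onto its mean $(\bar{\bm\xi},\bar{\bm\eta})$, isolates the fluctuations $(\bm\delta,\bm\epsilon)$, and after adding the two best-response inequalities completes the square to get $\tfrac12\bE[(\bm\delta+\bm\epsilon)^\top\Gamma(\bm\delta+\bm\epsilon)]+\theta\,\bE[\|\bm\delta\|^2+\|\bm\epsilon\|^2]\le0$, which forces $\bm\epsilon=-\bm\delta$ a.s.; substituting back and invoking the unique best response (strict convexity of $\Gamma_\theta$) kills the fluctuation. Algebraically the two routes hinge on the same trick — writing $\wt\Gamma+\wt\Gamma^\top=\Gamma$ and completing the square so that the cross term is absorbed into a $\Gamma$-nonnegative quadratic — and both correctly rely on the \emph{strict} positive definiteness of $\Gamma$ itself rather than just $\Gamma_\theta$, which is the point that makes the $\theta=0$ case close. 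The paper's interpolation argument has the modest advantage of being fully self-contained (it never separately appeals to uniqueness of best responses or to the Lagrange characterization of the deterministic equilibrium); your projection argument is arguably more transparent about \emph{why} an equilibrium must be deterministic, since it directly exhibits the mean as a competing best response. Either is a valid proof of the theorem.
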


The formula \eqref{equilibrium strategies} shows that the vectors $\bm v$ and $\bm w$ form a basis for all possible equilibrium strategies. It follows that in analyzing the Nash equilibrium it will be sufficient to study the two cases  $\bm \xi^*=\bm v=\bm\eta^*$ for  $X_0=1=Y_0$ and $\bm \xi^*=\bm w=-\bm\eta^*$ for  $X_0=1=-Y_0$.

Let us now comment on our choice of quadratic transaction costs. Such quadratic transaction costs are often used to model   \lq\lq slippage" arising from temporary price impact; see~\cite{BertsimasLo,AlmgrenChriss2} and~\cite[Section 2.2]{Gatheral}. Nevertheless, proportional transaction costs might be more realistic in many situations, and so the question arises if our results will change when the quadratic transaction costs $\theta\xi_k^2$ are replaced by (piecewise) linear transaction costs. This question is  at least partially answered by the following result. It states that our quadratic transaction cost function can be replaced by proportional transaction costs in a neighborhood of the origin without affecting the Nash equilibrium. Since the main difference of quadratic and proportional transaction costs is their behavior at the origin, one may therefore guess  that similar results as obtained in the following sections for quadratic transaction costs might also hold for proportional transaction costs. 

\begin{proposition}\label{linear transaction costs prop}
In the context of Theorem~\ref{th1}, there exists a piecewise linear, increasing, convex, and continuous transaction cost function $\tau$ with $\tau(0)=0$ such that $(\bm \xi^*,\bm \eta^*)$ from~\eqref{equilibrium strategies} is a Nash equilibrium in $\cX(X_0,\bT)\times\cX(Y_0,\bT)$ for the the modified expected cost functional in which the quadratic transaction cost function $x\mapsto \theta x^2$ is replaced with $x\mapsto\tau(|x|)$.
\end{proposition}

The transaction cost function $\tau$ constructed in the preceding proposition is of the form 
$$\tau(|x|)=\theta_0|x|+\sum_{k=1}^{M}\theta_k(|x|-c_k)\Ind{[c_k,\infty)}(|x|)
$$
for certain coefficients $\theta_k>0$ and thresholds $0<c_1<\cdots c_{M}$. Transaction costs of this form can  model a transaction tax that is subject to tax progression. With such a tax, small orders, such as those  placed by small investors, are taxed at a lower rate than large orders, which may be placed with the intention of moving the market. 

\subsection{The hot-potato game}\label{hot potato section}

We now turn toward a qualitative analysis of the equilibrium strategies. By means of numerical simulations and the analysis of a particular example, Sch\"oneborn~\cite[Section 9.3]{Schoeneborn} 
observed that the equilibrium strategies may exhibit strong oscillations if  $\theta=0$, the time grid is equidistant, and $G$ is of the form $G(t)=\lambda e^{-\rho t}+\gamma $ for constants $\lambda,\rho>0$ and $\gamma\ge0$. As a matter of fact, numerical simulations, such as those presented in Figures~\ref{StrategiesFig} and~\ref{random grid figure}, suggest that such oscillations can be observed for a large class of decay kernels as soon as transaction costs vanish ($\theta=0$) and the time grid is sufficiently fine.  We refer to Remark~\ref{FinancialInterpretRemark} for a possible financial interpretation of the oscillations arising in the hot-potato game. For a single financial agent, however, optimal strategies will always be buy-only or sell-only for convex, nonincreasing decay kernels, which include those used in Figures~\ref{StrategiesFig} and~\ref{random grid figure} (see~\cite[Theorem 1]{ASS}). Therefore, the oscillations  in our two-agent setting that are observed in these figures  must necessarily result from the interaction of both agents.

It is intuitively  clear that increased transaction costs will penalize  oscillating strategies and thus lead to a smoothing of the equilibrium strategies. As a matter of fact, one can see in Figure~\ref{nonosc fig} that for $\theta=2$ all oscillations have disappeared so that equilibrium strategies are then buy-only or sell-only.  
 One can therefore wonder whether between $\theta=0$ and $\theta=2$ there might be a \emph{critical value}  $\theta^*$ at which all oscillations of $\bm v$ and $\bm w$ disappear, but below which oscillations are present. That is, for $\theta\ge\theta^*$ all equilibrium strategies should be either buy-only or sell-only, while for $\theta<\theta^*$   equilibrium strategies should contain  both buy and sell trades  (at least for certain values of $N$ and $T$).  
 The following theorem confirms that such a critical value $\theta^*$ does indeed exist.  
We can even determine its precise  value in case that  we are dealing with {equidistant time grids},
\begin{equation}\label{equidistant time grid eq}
\bT_N:=\Big\{\frac {kT}N\,\Big|\,k=0,1,\dots,N\Big\},\qquad N\in\bN.
\end{equation}
And we will be able to say even more in case $G$ is of the form 
\begin{equation}\label{exponential decay kernel}
G(t)=\lambda e^{-\rho t}+\gamma\qquad\text{for constants $\lambda,\rho>0$ and $\gamma\ge0$.}
\end{equation}
It is well known that this class of decay kernels satisfies our assumption \eqref{G strictly positive definite} (see, e.g.,~\cite[Example 1]{ASS}), and they are clearly log-convex.

\begin{theorem}\label{nonosc thm}Suppose that  $G$ is a continuous, positive definite, strictly positive, and log-convex decay kernel and that   $\bT_N$ denotes the equidistant time grid \eqref{equidistant time grid eq}. Then the following conditions are equivalent.
\begin{enumerate}
\item For every $N\in\bN $ and $T>0$, all components of $\bm w$ are nonnegative.
\item $\theta\ge\theta^*=G(0)/4$.
\end{enumerate}
If, moreover, $G$ is of the form \eqref{exponential decay kernel}, then conditions {\rm (a)} and {\rm (b)} are equivalent to:
\begin{enumerate}
\item[\rm (c)] For every $N\in\bN $ and $T>0$, all components of $\bm v$ are nonnegative.
\end{enumerate}
\end{theorem}

\medskip

In the case $\theta<\theta^*$, one can actually obtain some stronger results on the existence of oscillations in the vector $\bm w$. These are stated in the following two propositions. First, we deal with the oscillations of the signs of the last three trades of $\bm w$, which are present as soon as $\theta<\theta^*$ and the time grid is sufficiently fine. Recall that $\bm w$ completely determines the unique Nash equilibrium with initial conditions $X_0=-Y_0$. 

\medskip

\begin{proposition}\label{three oscillations prop}Suppose that $G$ is a continuous and positive definite decay kernel that is nonincreasing in a neighborhood of zero.  Then for $0\le \theta<\theta^*$ there exists $\delta>0$ such that for all time grids $\bT=\{t_0,t_1,\dots, t_N\}$ with $t_N-t_{N-1}<\delta$ and $t_{N-1}-t_{N-2}<\delta$, the last three components of  the vector $\bm w$ satisfy $w_{N+1}>0$, $w_N<0$, and $w_{N-1}>0$. \end{proposition}

The simulations in Figures~\ref{StrategiesFig} and~~\ref{random grid figure} show that for $\theta=0$ actually all components of the vectors $\bm w$ and $\bm v$ have oscillating signs. The following propositions establishes the existence of oscillations for $\bm w$ in the case of an exponential decay kernel and an equidistant time grid.

\begin{proposition}\label{oscillations prop}Suppose that $G$ is of the form $G(t)=\lambda e^{-\rho t}$ for constants $\lambda,\rho>0$ and that $\bT_N$ denotes the equidistant time grid \eqref{equidistant time grid eq} for some given $T>0$. Then there exists  $N_0\in\bN$ such that for each $N\ge N_0$  there exists $\delta>0$ so  that for $0\le\theta<\delta$  all entries of the vector $\bm w=(w_1,\dots, w_{N+1})$ are nonzero and have alternating signs.
\end{proposition}

We refer to the right-hand panel of Figure~\ref{StrategiesFig} for an illustration of the oscillations of the vector $\bm w$.  As shown in the left-hand panel of  the same figure, similar oscillations occur for the vector ${\bm v}$ and hence for equilibria with arbitrary initial conditions. The mathematical analysis for ${\bm v}$, however, is much harder than for ${\bm w}$, and at this time we are not able to prove a result that could be an analogue of Proposition~\ref{oscillations prop} for the vector $\bm v$. The existence of oscillations of ${\bm w}$ and ${\bm v}$ is also not limited to exponential decay kernels as can be seen from numerical experiments;  see Figure~\ref{random grid figure} for power law decay and a randomly generated, non-equidistant time grid. \begin{figure}
\centering
\begin{minipage}[b]{8cm}
\includegraphics[width=8cm]{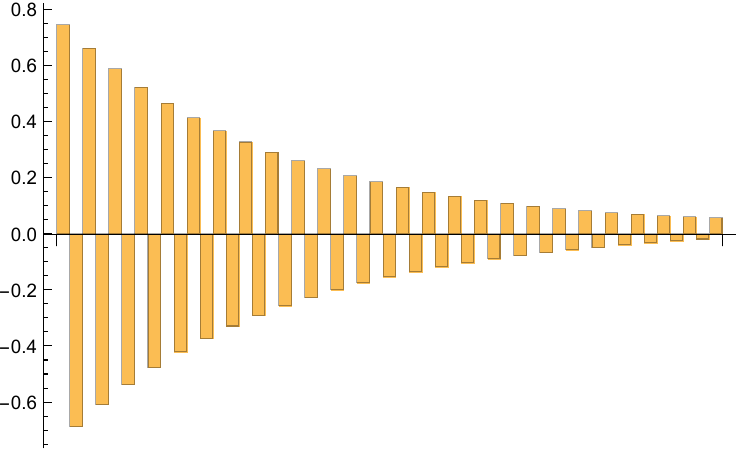}\\
\end{minipage}\qquad
\begin{minipage}[b]{8cm}
\includegraphics[width=8cm]{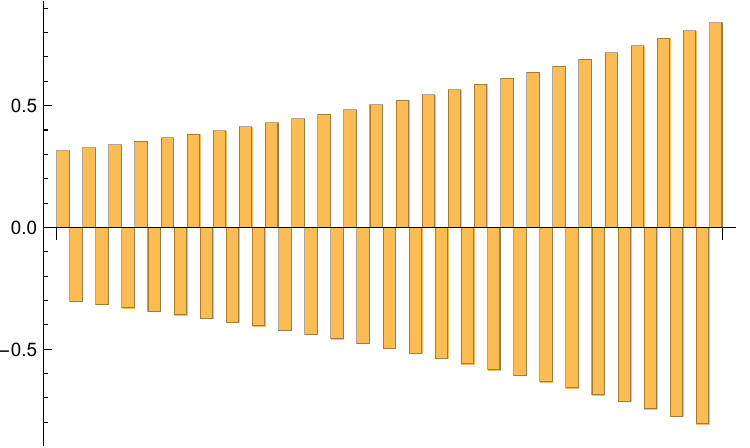}\\
\end{minipage}
 \caption{Vectors $\bm v$ (left) and $\bm w$ (right) for the equidistant time grid $\bT_{50}$, $G(t)=e^{-t}$,  $\theta=0$, and $T=1$. By \eqref{equilibrium strategies}, $(\bm v,\bm v)$ is the equilibrium  for $X_0=Y_0=1$, and $(\bm w,-\bm w)$ is the equilibrium  for $X_0=-Y_0=1$. Yet, some individual components of both $\bm v$ and $\bm w$ exceed  in either direction 60\% of the sizes of the initial positions $X_0$ and $Y_0$.}\label{StrategiesFig}
\end{figure}
\begin{figure}
\centering
\begin{minipage}[b]{8.0cm}
\includegraphics[width=8cm]{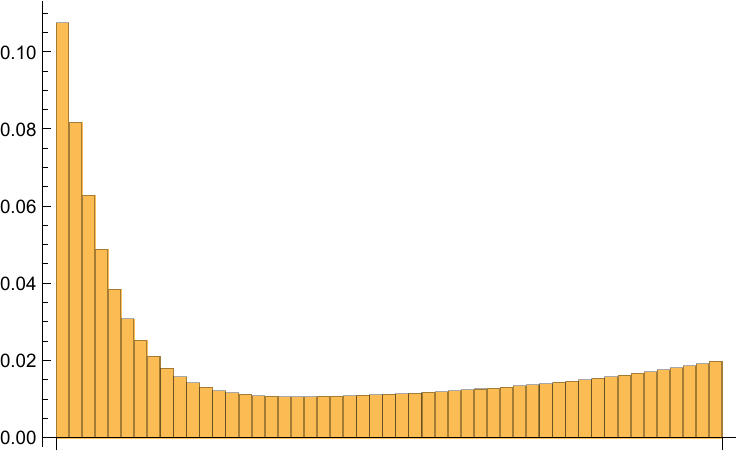}\\
\end{minipage}\qquad
\begin{minipage}[b]{8cm}
\includegraphics[width=8cm]{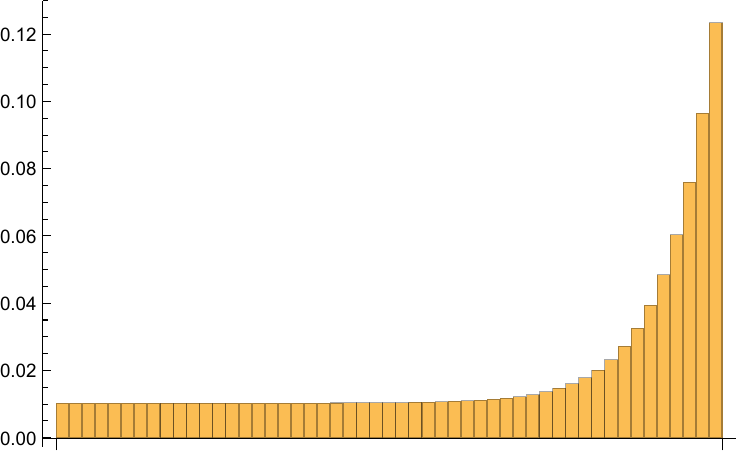}\\
\end{minipage}
 \caption{Vectors $\bm v$ (left) and $\bm w$ (right) for the equidistant time grid $\bT_{50}$, $G(t)=e^{-t}$,  $\theta=2$, and $T=1$. }\label{nonosc fig}
\end{figure}
\begin{figure}
\centering
\begin{minipage}[b]{8cm}
\includegraphics[width=8cm]{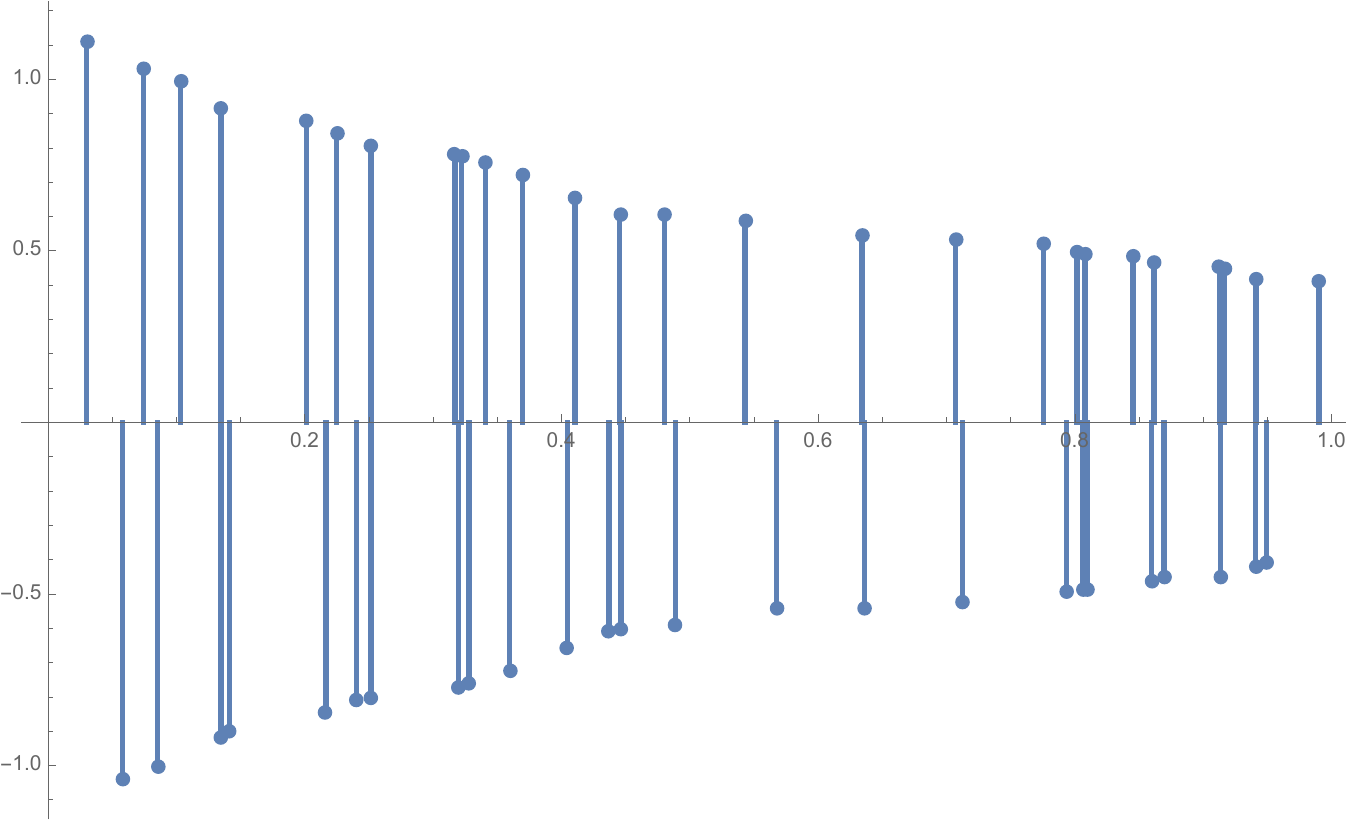}\\
\end{minipage}\qquad
\begin{minipage}[b]{8cm}
\includegraphics[width=8cm]{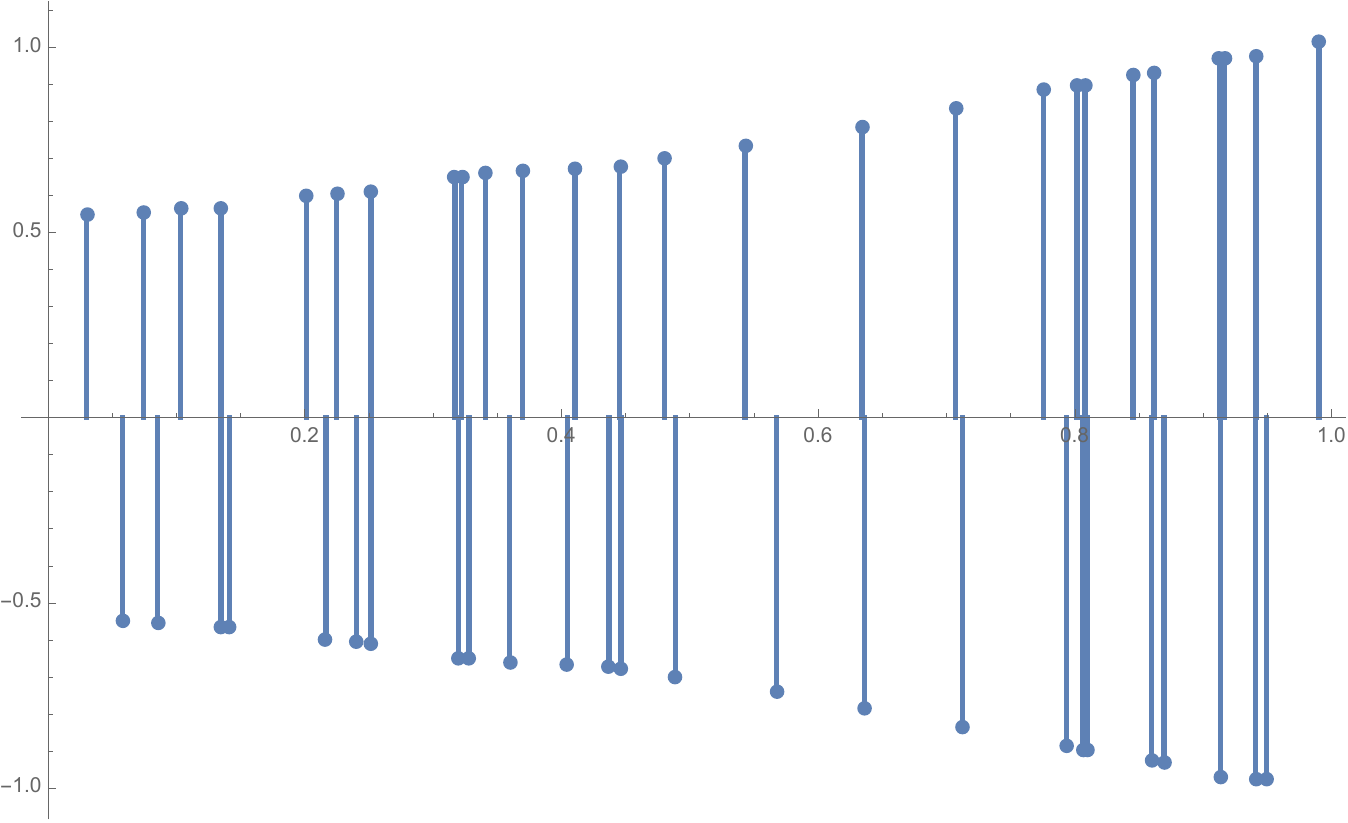}\\
\end{minipage}
 \caption{Vectors $\bm v$ (left) and $\bm w$ (right) for   power-law decay  $G(t)=1/\sqrt{1+t}$ and a time grid generated from 50 independent uniformly distributed random variables on.}
 \label{random grid figure}\end{figure}

\medskip

\begin{remark}\label{FinancialInterpretRemark} 
In this remark we will discuss a possible financial explanation for the oscillations of equilibrium strategies  observed  for small values of $\theta$. As mentioned above, the source for these oscillations must necessarily lie in the interaction between the two agents. As observed in previous studies on multi-agent equilibria in price impact models such as~\cite{BrunnermeierPedersen,Carlinetal,SchoenebornSchied}, the dominant form of interaction between two players is \emph{predatory trading}, which consists in the exploitation of price impact generated by another agent.  Such strategies are \lq\lq predatory" in the sense that they  generate profit  by simultaneously decreasing  the other agent's revenues. Since predators prey on the drift created by the price impact of a  large trade, protection against predatory trading requires the cancellation of previously created price impact. Under transient price impact, the price impact of an earlier trade, say $\zeta_0$, can be cancelled by  placing an order $\zeta_{ 1}$ of the opposite side. For instance, taking $\zeta_{1}:=-\zeta_0G(t_{1}-t_0)$ will completely  eliminate the  price impact of  $\zeta_0$ while the combined trades execute  a total of $\xi_0(1-  G(t_{1}-t_0))$ shares. In this sense, oscillating strategies can be understood as a protection against predatory trading by opponents (see also~\cite[p.150]{Schoeneborn}).
\end{remark}

\begin{remark}
Alfonsi et al.~\cite{ASS} discovered  oscillations for the trade execution strategies of a \emph{single} trader under transient price impact if price impact does not decay as a convex function of time. These oscillations, however, result from an attempt to exploit the delay in market response to a large trade, and they disappear if price impact decays as a convex function of time~\cite[Theorem 1]{ASS}. In particular, when there is just one agent active and $G$ is convex, nonincreasing, and nonconstant (which is, e.g., the case under assumption \eqref{exponential decay kernel}), then for each time grid $\bT$ there exists a unique optimal strategy, which is either buy-only or sell-only. When \eqref{exponential decay kernel} holds and $\theta=0$, this strategy is known explicitly; see~\cite{AFS1}.
\end{remark}

\begin{remark}Based on numerical simulations, we believe that the statements of Theorem~\ref{nonosc thm} and Proposition~\ref{oscillations prop} can probably be improved. Specifically, we conjecture that the equivalence between the conditions (a), (b), and (c)  in Theorem~\ref{nonosc thm} remains true for all positive definite decay kernels. Our current proofs, however, cannot be extended beyond our stated conditions. Specifically, the implication of (b)$\Rightarrow$(a) in Theorem~\ref{nonosc thm} exploits the Toeplitz structure of the upper triangular matrix $\Gamma_\theta-\wt\Gamma$, which only holds for equidistant time grids. We then use the fact that the inverse of a triangular Toeplitz matrix corresponds to the (formal)  reciprocal of a power series, and we use the celebrated Kaluza sign criterion~\cite{Kaluza,Szego} to determine the signs of this reciprocal power series. Here, the log-convexity of $G$ is essential. The proof of the implication (b)$\Rightarrow$(c) relies on the theory of $M$-matrices as presented in~\cite{BermanPlemmons}. In particular, we rely on the fact that the matrix $\Gamma^{-1}(\wt\Gamma+\frac12\text{Id})$ is a non-singular $M$-matrix for $G(t)=e^{-\rho t}$ (Lemma~\ref{lm M aux}), which is no longer  true, e.g., for power law decay $G(t)=1/(1+t)^p$ with $p>0$. Similarly, the proof of Proposition~\ref{oscillations prop} exploits the fact that the upper triangular  matrix $\Gamma-\wt\Gamma$ can be inverted explicitly if the time grid is equidistant and $G(t)=e^{-\rho t}$. Surprisingly, although the matrix $\Gamma$ has an explicit inverse for any time grid if $G(t)=e^{-\rho t}$ (see~\cite[Theorem 3.4]{AFS1}), the structure of $(\Gamma-\wt\Gamma)^{-1}$ becomes quite involved if the time grid is not equidistant. Already for equidistant time grids, the same can be said of the matrix $(\Gamma+\wt\Gamma)^{-1}$, which is needed to compute the vector $\bm v$. \end{remark}

\subsection{The impact of transaction costs and trading frequency on the expected costs}\label{transaction tax section}

Due to our explicit formulas \eqref{v and w def} and \eqref{equilibrium strategies}, it is easy to analyze the Nash equilibrium numerically.  These numerical simulations exhibit several striking effects in regards to monotonicity properties of the expected costs. 

In Figure~\ref{Monotonicity Cost fig} we have plotted the expected costs
$\bE[\,\cC_{\bT_N}(\bm\xi^*|\bm\eta^*)\,]= \bE[\,\cC_{\bT_N}(\bm\eta^*|\bm\xi^*)\,]$ for $X_0=Y_0$, $G(t)=e^{-t}$, and  $T=1$ as a function of the trading frequency, $N$. The first observation one probably makes when looking at this plot is the fact that for $\theta=0$ the expected costs exhibit a sawtooth-like pattern; they alternate between two increasing trajectories, depending on whether $N$ is odd or even. These alternations are due to the oscillations of the optimal strategies, which also alternate with $N$.  As can be seen from the figure, the sawtooth pattern  essentially disappears already for very small values of $\theta$ such as for  $\theta=0.01$.

A   more interesting observation is the fact that for  $\theta=0$, $\theta=0.01$, and $\theta=0.1$ the expected costs $\bE[\,\cC_{\bT_{2N}}(\bm\xi^*|\bm\eta^*)\,]$ (or alternatively $\bE[\,\cC_{\bT_{2N+1}}(\bm\xi^*|\bm\eta^*)\,]$) are \emph{increasing} in $N$. This fact is surprising because a higher trading frequency should  normally lead to a larger class of admissible strategies. As a result, traders have greater flexibility in choosing  a   strategy and  in turn should be able to pick  more cost efficient strategies. So why are the costs then increasing in $N$? The intuitive explanation is that a higher trading frequency increases also the possibility for the competitor to conduct predatory strategies at the expense of the other agent (see Remark~\ref{FinancialInterpretRemark}). In reaction,  this other agent needs  to take stronger protective measures against predatory trading. As discussed in  Remark~\ref{FinancialInterpretRemark}, protection against predatory trading can be obtained by erasing (part of) the previously created price impact  through  placing an order of the opposite side. The result is an oscillatory strategy, whose expected costs increase with the number of its oscillations. 

Still in Figure~\ref{Monotonicity Cost fig}, the expected costs $\bE[\,\cC_{\bT_N}(\bm\xi^*|\bm\eta^*)\,]$ for the case $\theta=\theta^*=0.25$ exhibit a very different behavior. They no longer alternate in $N$ and are \emph{decreasing} as a function of the trading frequency. The intuitive explanation is that transaction costs of size $\theta^*=0.25$ discourage predatory trading to a large extend,  so that agents can now benefit from a higher trading frequency and pick ever more cost-efficient strategies as $N$ increases.

The most surprising observation in Figure~\ref{Monotonicity Cost fig} is the fact that for sufficiently large $N$ the expected costs for  $\theta>0$  fall  below the expected costs for $\theta=0$. That is, for sufficiently large trading frequency, \emph{adding transaction costs can decrease the expected  costs of all market participants} (recall that for $X_0=Y_0$ both agents have the same optimal strategies and, hence, the same expected costs). 
This fact is further illustrated in Figure~\ref{costs(theta) fig}, which exhibits a very steep initial decrease of the  expected costs as a function of $\theta$. After a minimum of the expected costs is reached at $\theta\approx 0.06$, there is a slow and steady increase of the costs with an approximate slope of 0.002.

The key to understanding the behavior of expected equilibrium costs as a function of trading frequency and transaction costs rests in the interpretation of the oscillations in equilibrium strategies as a protection against predatory trading by the opponent (see   Remark~\ref{FinancialInterpretRemark}). Note that a predatory trading strategy is necessarily  a \lq\lq round trip", i.e., a strategy with zero inventory at $t=0$ and $T=0$ (the strategy of a  predatory trader with nonzero initial position would consist of a superposition of a predatory round trip and a liquidation strategy for the initial position). It therefore must consist of a buy and a sell component and is hence   stronger penalized by an increase in transaction costs than a buy-only or sell-only strategy. As a result, increasing transaction costs leads to an overall reduction of the proportion of predatory trades in equilibrium. In consequence,  both agents in our model can reduce their protection against predatory trading and therefore use more efficient strategies to carry out their trades. They can thus fully benefit from higher trading frequencies, which leads to the observed decrease of expected costs as a function of $N$ if $\theta$ is sufficiently large.  Moreover, for appropriate values of $\theta>0$, the benefit of increased efficiency outweighs the price to be paid in higher transaction costs and so an overall reduction of costs is achieved. 

Let us point out that, in the case $G(t)=e^{-\rho t}$,  many qualitative observations made in this section by means of numerical experiments have meanwhile been given rigorous mathematical proofs in our follow-up paper~\cite{SchiedStrehleZhang}, which has Elias Strehle as additional coauthor. There, we investigate the limits of equilibrium strategies and expected costs as $N\uparrow\infty$. We prove that, for $\theta=0$, both strategies and costs  oscillate indefinitely between two accumulation points, for which we provide explicit formulas. For $\theta>0$, however, strategies and  costs  converge toward limits that are independent of $\theta$. We then show that the limiting strategies form a Nash equilibrium for a continuous-time version of the model with $\theta=\theta^*$, and that the corresponding expected costs coincide with the high-frequency limits of the discrete-time equilibrium costs. For $\theta\neq\theta^*$, however, continuous-time Nash equilibria do not exist unless $X_0=Y_0=0$.

Another interesting question is the comparison of the expected costs of the equilibrium strategies with the expected costs that both agents would have if none of them were aware of the other's trading activities. In this case, a trader with initial inventory $Z_0$ will apply the strategy 
\begin{equation*}
\wh{\bm\zeta}^{Z_0}=\frac{Z_0}{\bm 1^\top\Gamma_\theta^{-1}\bm 1}\Gamma_\theta^{-1}\bm 1,
\end{equation*}
which is  the  strategy for a single trader facing the positive definite decay  $G$ and transaction costs measured by the parameter $\theta\ge0$; this follows by taking the positive definite decay kernel $G(t)+\Ind{\{0\}}(t)$  in~\cite[Proposition 1]{ASS}. We can thus define a \emph{price of anarchy} in our situation by letting
\begin{equation*}
\text{PoA}_N(\theta,X_0,Y_0):=\frac{\bE[\,\cC_{\bT_{N}}(\wh{\bm\zeta}^{X_0}|\wh{\bm\zeta}^{Y_0})\,]+\bE[\,\cC_{\bT_{N}}(\wh{\bm\zeta}^{Y_0}|\wh{\bm\zeta}^{X_0})\,]}{\bE[\,\cC_{\bT_{N}}(\bm\xi^*|\bm\eta^*)\,]+\bE[\,\cC_{\bT_{N}}(\bm\eta^*|\bm\xi^*)\,]},
\end{equation*}
where $\bm\xi^*$ and $\bm\eta^*$ are the equilibrium strategies from \eqref{equilibrium strategies}. See Figure~\ref{PoA figure}
 for a plot.

\begin{figure}
\hspace{1.5cm}
\begin{overpic}[width=10cm]{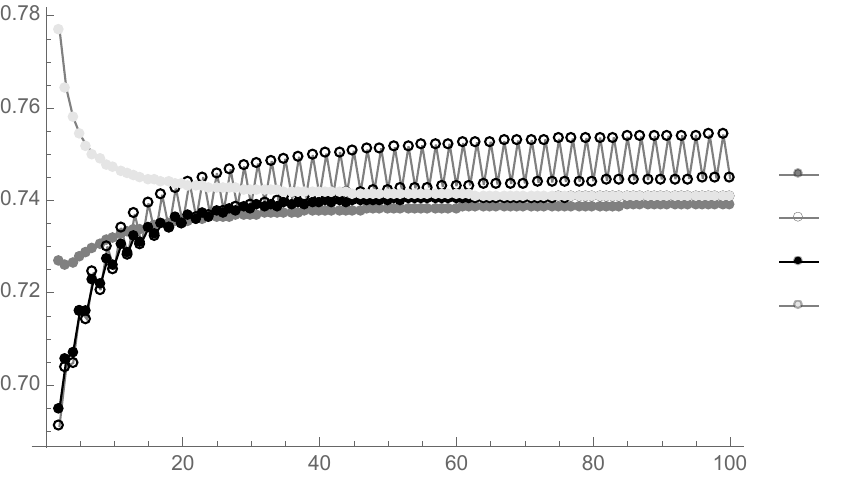}
\put(98,35.5){\small$\theta=0.1$}
\put(98,30.5){\small$\theta=0$}
\put(98,25.4){\small$\theta=0.01$}
\put(98,20.0){\small$\theta=\theta^*=0.25$}
\put(89,0){\small$N$}
\end{overpic}
\caption{Expected costs $\bE[\,\cC_{\bT_{N}}(\bm\xi^*|\bm\eta^*)\,]= \bE[\,\cC_{\bT_{N}}(\bm\eta^*|\bm\xi^*)\,]$ for various values of $\theta$ as a function of trading frequency, $N$, with the equidistant time grid $\bT_N$, $T=1$, $G(t)=e^{-t}$,  and $X_0=Y_0=1$.}\label{Monotonicity Cost fig}
\end{figure}

\begin{figure}
\centering
\begin{overpic}[width=10cm]{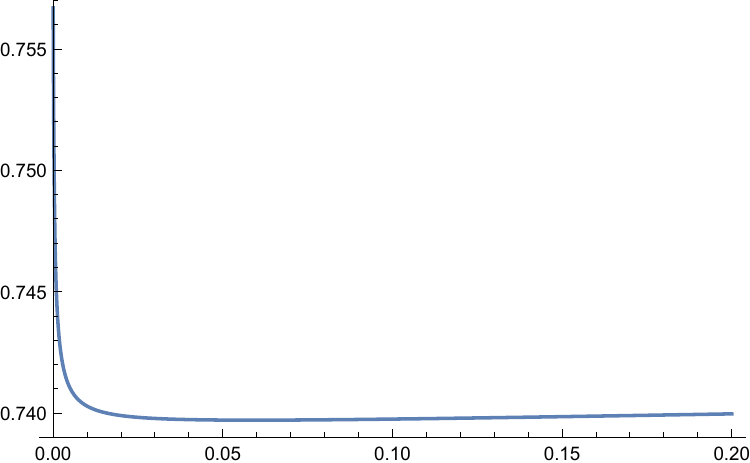}
\put(101,0){\small$\theta$}
\end{overpic}
\caption{ Expected costs $\bE[\,\cC_{\bT_{501}}(\bm\xi^*|\bm\eta^*)\,]$  as a function of $\theta$ for  initial values $X_0=Y_0=1$ and $G(t)=e^{-t}$. The costs decrease steeply from the value 0.7567 at $\theta=0$ until a minimum value of about $0.7397$ at $\theta=0.06$. From then on there is a moderate and almost linear increase with, e.g., a value of $0.7407$ at $\theta=0.5$. This increase corresponds to a slope of approximately 0.002. We took the  equidistant time grid $\bT_{501}$ and $\rho=1$.}
\label{costs(theta) fig}
\end{figure}

\begin{figure}
\centering
\begin{minipage}[b]{8.0cm}
\begin{overpic}[width=8cm]{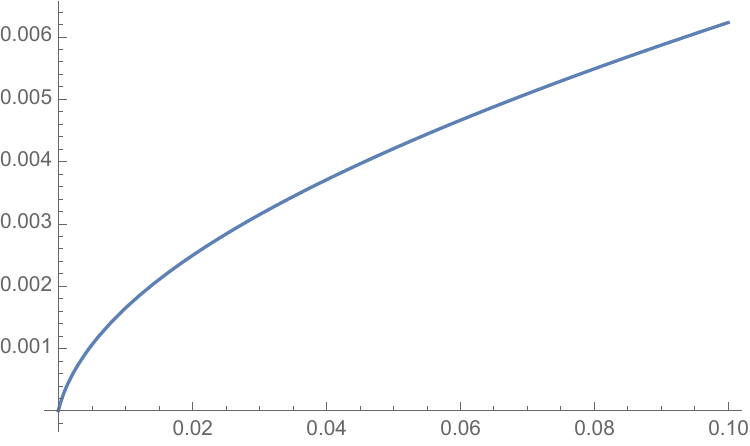}
\end{overpic}
\end{minipage}\qquad
\begin{minipage}[b]{8.0cm}
\begin{overpic}[width=8cm]{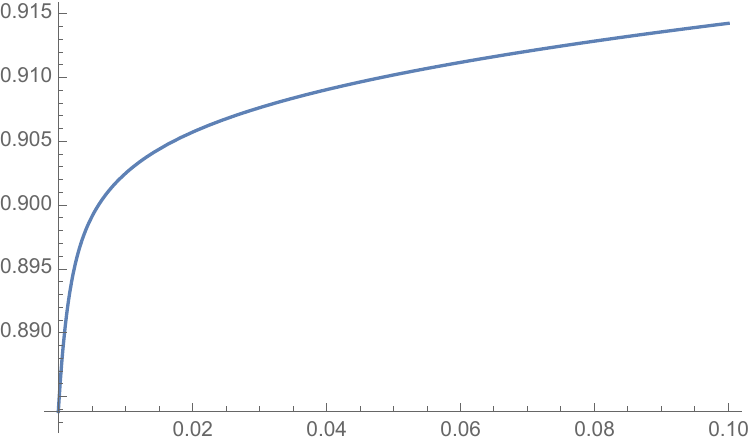}
\end{overpic}
\end{minipage}
\caption{Price of anarchy, $\text{PoA}_{101}(\theta,X_0,Y_0)$, as a function of $\theta$ for $X_0=1$ and $Y_0=-1$ (left) and $X_0=1$ and $Y_0=1$ (right)  for $G(t)=e^{-t}$. The steep increase on the right-hand panel is due to the initial decrease of the expected costs for $X_0=Y_0=1$ as shown in Figure~\ref{costs(theta) fig}. The steep of the price of anarchy in the right-hand panel is the result of the decrease of the corresponding equilibrium strategies as shown in Figure~\ref{costs(theta) fig}.}\label{PoA figure}
\end{figure}

\section{Proofs}\label{Proof Section}

\subsection{Proof of Theorem~\ref{th1} and Proposition~\ref{linear transaction costs prop}}

\begin{lemma}\label{lm1}
The expected costs of an admissible strategy $\bm \xi\in\mathscr{X}(X_0,\mathbb{T})$ given another admissible strategy $\bm \eta\in\mathscr{X}(Y_0,\mathbb{T})$ are
\begin{eqnarray}\label{x0s0}
\bE[\,\mathscr{C}_\bT(\bm \xi|\bm \eta)\,]&=&\bE\Big[\,\frac{1}{2}\bm \xi^\top\Gamma_\theta\bm \xi+\bm \xi^\top  \wt{{{\Gamma}}}\bm \eta\,\Big].
\end{eqnarray}
\end{lemma}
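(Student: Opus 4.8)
The plan is to expand the cost functional \eqref{costs def eq} directly, substitute the explicit form \eqref{price process discrete} of the affected price, take expectations term by term, and then recognize the resulting expression as the quadratic form on the right-hand side of \eqref{x0s0}. Concretely, I would first insert $S_{t_k}^{\bm \xi,\bm \eta}=S_{t_k}^0-\sum_{j<k}G(t_k-t_j)(\xi_j+\eta_j)$ (using that on the grid $\bT$ one has $t_j<t_k$ exactly when $j<k$) into the term $-S_{t_k}^{\bm \xi,\bm \eta}\xi_k$. Summing over $k$ and collecting, the cost then decomposes into three groups: a ``book value'' part $X_0S_0^0-\sum_k S_{t_k}^0\xi_k$; a part quadratic in $\bm \xi$, namely $\sum_k\big(\tfrac{G(0)}2+\theta\big)\xi_k^2+\sum_{j<k}G(t_k-t_j)\xi_j\xi_k$; and a cross part $\sum_{j<k}G(t_k-t_j)\eta_j\xi_k+\sum_k\eps_k G(0)\xi_k\eta_k$.

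For the book-value part I would use a summation-by-parts (Abel) argument: writing $Q_k=\xi_0+\cdots+\xi_k$ with $Q_{-1}=0$ and $Q_N=X_0$, one obtains
$$\sum_{k=0}^N S_{t_k}^0\xi_k = X_0 S_{t_N}^0 - \sum_{k=0}^{N-1}Q_k\big(S_{t_{k+1}}^0-S_{t_k}^0\big).$$
Since each $\xi_k$ is bounded and $\cF_{t_k}$-measurable, $Q_k$ is bounded and $\cF_{t_k}$-measurable, so the martingale property of $S^0$ gives $\bE\big[Q_k(S_{t_{k+1}}^0-S_{t_k}^0)\big]=0$, while $\bE[X_0 S_{t_N}^0]=X_0 S_0^0$ because $\cF_0$ is $\bP$-trivial; hence the book-value part has expectation zero and the term $X_0S_0^0$ cancels. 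For the coin-toss term, since $(\eps_i)_{i\ge0}$ is independent of $\sigma(\bigcup_{t\ge0}\cF_t)$ while $\xi_k,\eta_k$ are $\cF_{t_k}$-measurable, $\bE[\eps_k G(0)\xi_k\eta_k]=\tfrac{G(0)}2\,\bE[\xi_k\eta_k]$.

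Finally I would match the remaining deterministic quadratic forms with the matrices. Using the symmetry $G(|t_i-t_j|)=G(|t_j-t_i|)$, one checks $\tfrac12\bm \xi^\top\Gamma\bm \xi=\tfrac{G(0)}2\sum_k\xi_k^2+\sum_{j<k}G(t_k-t_j)\xi_j\xi_k$, so with $\Gamma_\theta=\Gamma+2\theta\Id$ the $\bm \xi$-quadratic part equals $\tfrac12\bm \xi^\top\Gamma_\theta\bm \xi$; and from the definition \eqref{wt Gamma def} of the lower-triangular $\wt\Gamma$ one gets $\bm \xi^\top\wt\Gamma\bm \eta=\sum_{j<k}G(t_k-t_j)\xi_k\eta_j+\tfrac{G(0)}2\sum_k\xi_k\eta_k$, which is exactly the expected cross part. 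Collecting these identities yields \eqref{x0s0}. The only steps requiring genuine care are the Abel-summation/martingale argument and the index bookkeeping (the shift between the $0,\dots,N$ indexing of strategies and the $1,\dots,N+1$ indexing of the matrices $\Gamma$, $\Gamma_\theta$, $\wt\Gamma$); everything else is routine algebra, and integrability of all quantities is guaranteed by the boundedness assumption in Definition \ref{Strategies def} together with integrability of the martingale $S^0$.
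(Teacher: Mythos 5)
Your proof is correct and follows essentially the same route as the paper's: expand the cost functional using \eqref{price process discrete}, use independence of the coin tosses to replace $\eps_k$ by $\tfrac12$ in expectation, eliminate the unaffected-price contribution via the martingale property together with condition (b) of Definition~\ref{Strategies def}, and recognize the remaining deterministic sums as $\tfrac12\bm\xi^\top\Gamma_\theta\bm\xi+\bm\xi^\top\wt\Gamma\bm\eta$. The only (harmless) deviation is in the martingale step: you use Abel summation to reduce to increments $Q_k(S^0_{t_{k+1}}-S^0_{t_k})$, whereas the paper applies the tower property directly to get $\bE[\xi_k S^0_{t_k}]=\bE[\xi_k S^0_T]$ and then sums using $\sum_k\xi_k=X_0$; both are standard and equivalent.
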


\begin{proof} Without loss of generality, we may assume $G(0)=1$. Since the sequence $(\eps_i)_{i=0,1,\dots}$ is independent of $\sigma(\bigcup_{t\ge0}\cF_t)$ and the two strategies $\bm \xi$ and $\bm \eta$ are measurable with respect to this $\sigma$-field, we get $\bE[\,\eps_k \xi_k\eta_k\,]=\frac1 2\bE[\,\xi_k\eta_k\,]$. Hence, 
\begin{eqnarray*}
\bE[\,\mathscr{C}_\bT(\bm \xi|\bm \eta)\,]-X_0S^0_0&=&\bE\bigg[\,\sum_{k=0}^N\Big(\frac12\xi_k^2-S_{t_k}^{\bm \xi,\bm \eta}\xi_k+\eps_k\xi_k\eta_k+\theta\xi_k^2\Big)\,\bigg]\\
&=&\bE\bigg[\,\sum_{k=0}^N\bigg(\frac12\xi_k^2+\frac12\xi_k\eta_k-\xi_k\Big(S^0_{t_k}- \sum_{m=0}^{k-1}(\xi_{m}+\eta_{m})G(t_k-t_m)\Big)+\theta\xi_k^2\bigg)\,\bigg]\\
&=&\bE\bigg[\,-\sum_{k=0}^N\xi_kS^0_{t_k}+\frac12\sum_{k=0}^N\xi_k^2+ \sum_{k=0}^N\xi_k\sum_{m=0}^{k-1}\xi_mG(t_k-t_m)\\
&&\qquad+\sum_{k=0}^N\bigg(\xi_k\Big(\frac12\eta_k+ \sum_{m=0}^{k-1}\eta_{m}G(t_k-t_m)\Big)+\theta\xi_k^2\bigg)\,\bigg].
\end{eqnarray*}
Since each $\xi_k$ is $\cF_{t_k}$-measurable and $S^0$ is a martingale, we get from condition (b) in Definition~\ref{Strategies def} that
$$\bE\Big[\,\sum_{k=0}^N\xi_kS^0_{t_k}\,\Big]=\bE\Big[\,\sum_{k=0}^N\xi_kS^0_{T}\,\Big]=X_0\bE[\,S^0_T\,]=X_0S_0^0.
$$
Moreover,
$$\frac12\sum_{k=0}^N\xi_k^2+ \sum_{k=0}^N\xi_k\sum_{m=0}^{k-1}\xi_mG(t_k-t_m)=\frac12\sum_{k,m=0}^N\xi_k\xi_m  G(|t_k-t_m|)=\frac12\bm \xi^\top  {\Gamma}\bm \xi,
$$
and
$$\sum_{k=0}^N\xi_k\Big(\frac12\eta_k+ \sum_{m=0}^{k-1}\eta_{m}G(t_k-t_m)\Big)=\bm \xi^\top  \wt {\Gamma}\bm \eta.
$$
Putting everything together yields the assertion.
\end{proof}

We will use the convention of saying that an $n\times n$-matrix $A$ is \emph{positive}  if $\bm x^\top A\bm x>0$ for all nonzero $\bm x\in\bR^{n}$, which makes sense also if $A$ is not necessarily symmetric. Clearly, for a positive   matrix $A$ there is no  nonzero $\bm x\in\bR^{n}$ for which $A\bm x=\bm0$, and so $A$ is invertible. Moreover, 
 writing a given nonzero $\bm x\in\bR^{n}$ as $\bm x=A\bm y$ for $\bm y=A^{-1}\bm x\neq\bm 0$, we see that $\bm x^\top A^{-1}\bm x=\bm y^\top A^\top\bm y=\bm y^\top A\bm y>0$. So the inverse of a positive    matrix is also positive. Recall that we assume \eqref{G strictly positive definite} throughout this paper.

\begin{lemma}\label{positive definite lemma}The matrices ${\Gamma_\theta}$, $\wt {\Gamma}$,  ${\Gamma_\theta}+\wt {\Gamma}$, ${\Gamma_\theta}-\wt {\Gamma}$   are positive     for all $\theta\ge0$. In particular, all terms in \eqref{v and w def} are well-defined and the denominators in \eqref{v and w def} are strictly positive.
\end{lemma}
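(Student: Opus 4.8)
The plan is to reduce everything to the strict positive definiteness of $\Gamma$ assumed in \eqref{G strictly positive definite}. First observe that $\Gamma_\theta = \Gamma + 2\theta\Id$ is positive definite for every $\theta\ge 0$: for nonzero $\bm x$ we have $\bm x^\top\Gamma_\theta\bm x = \bm x^\top\Gamma\bm x + 2\theta|\bm x|^2$, and the first term is $\ge 0$ (indeed $>0$) by \eqref{G strictly positive definite} while the second is $\ge 0$. Next I would handle $\wt\Gamma$. The key algebraic identity is that $\wt\Gamma + \wt\Gamma^\top = \Gamma$, since $\wt\Gamma$ is the lower-triangular part of $\Gamma$ including half the diagonal $\tfrac12 G(0)$, and $\Gamma$ is symmetric with diagonal $G(0)$. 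Therefore for any $\bm x$,
\[
\bm x^\top \wt\Gamma\,\bm x = \tfrac12\bm x^\top(\wt\Gamma + \wt\Gamma^\top)\bm x = \tfrac12\bm x^\top\Gamma\bm x,
\]
which is $>0$ for $\bm x\neq \bm 0$ by \eqref{G strictly positive definite}; hence $\wt\Gamma$ is positive definite in the (not necessarily symmetric) sense used here.

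The same symmetrization trick then disposes of $\Gamma_\theta \pm \wt\Gamma$. For $\bm x\neq\bm 0$,
\[
\bm x^\top(\Gamma_\theta \pm \wt\Gamma)\bm x = \bm x^\top\Gamma_\theta\bm x \pm \bm x^\top\wt\Gamma\,\bm x = \bm x^\top\Gamma\bm x + 2\theta|\bm x|^2 \pm \tfrac12\bm x^\top\Gamma\bm x = \big(1\pm\tfrac12\big)\bm x^\top\Gamma\bm x + 2\theta|\bm x|^2 > 0,
\]
since both $1+\tfrac12$ and $1-\tfrac12$ are positive and $\bm x^\top\Gamma\bm x>0$. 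This proves positive definiteness of all four matrices.

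For the final sentence of the lemma, recall from the discussion preceding it that a positive definite matrix (symmetric or not) is invertible, and its inverse is again positive definite. Hence $(\Gamma_\theta+\wt\Gamma)^{-1}$ and $(\Gamma_\theta-\wt\Gamma)^{-1}$ exist and are positive definite, so all expressions in \eqref{v and w def} are well-defined; moreover the denominators are $\bm 1^\top(\Gamma_\theta\pm\wt\Gamma)^{-1}\bm 1 > 0$ because $\bm 1\neq\bm 0$ and $(\Gamma_\theta\pm\wt\Gamma)^{-1}$ is positive definite.

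There is no real obstacle here; the only thing to get right is the bookkeeping identity $\wt\Gamma + \wt\Gamma^\top = \Gamma$, which follows immediately from \eqref{wt Gamma def} and the symmetry of $\Gamma$. Everything else is the elementary observation that adding a nonnegative quadratic form (from $2\theta\Id$) and scaling a positive one preserves positive definiteness.
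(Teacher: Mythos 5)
Your argument is correct and follows essentially the same route as the paper: both proofs rest on the identity $\wt\Gamma + \wt\Gamma^\top = \Gamma$ to show $\bm x^\top\wt\Gamma\bm x = \tfrac12\bm x^\top\Gamma\bm x$, then use that adding $2\theta\Id$ and summing positive-definite matrices preserves positive definiteness (the paper phrases this as $\Gamma_\theta\pm\wt\Gamma = \Gamma\pm\wt\Gamma + 2\theta\Id$ with $\Gamma - \wt\Gamma = \wt\Gamma^\top$, while you expand the quadratic form directly, but the content is identical). The closing observation that a positive-definite (not necessarily symmetric) matrix is invertible with positive-definite inverse is exactly what the paper establishes in the discussion immediately preceding the lemma.
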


\begin{proof}That ${\Gamma}$ is positive definite,  and hence positive,  follows directly from \eqref{G strictly positive definite}. Therefore, for nonzero ${\bm x}\in\bR^{N+1}$,
$$0<{\bm x}^\top {\Gamma}{\bm x}={\bm x}^\top(\wt {\Gamma}+\wt {\Gamma}^\top){\bm x}={\bm x}^\top \wt {\Gamma}{\bm x}+{\bm x}^\top\wt {\Gamma}^\top {\bm x}=2{\bm x}^\top \wt {\Gamma}{\bm x},
$$
which shows that the matrix $\wt {\Gamma}$ is positive. Next, ${\Gamma}-\wt {\Gamma}=\wt {\Gamma}^\top$ and so this matrix is also positive. Clearly, the sum of two positive   matrices is also positive, which  shows that  ${\Gamma_\theta}+\wt {\Gamma}={\Gamma}+\wt {\Gamma}+2\theta\Id$ and ${\Gamma_\theta}-\wt {\Gamma}={\Gamma}-\wt {\Gamma}+2\theta\Id$ are positive   for $\theta\ge0$.
\end{proof}

\begin{lemma}\label{uniqueness lemma} For given time grid $\bT$ and initial values $X_0$ and $Y_0$,
there exists at most one Nash equilibrium in the class $\cX(X_0,\bT)\times\cX(Y_0,\bT)$.
\end{lemma}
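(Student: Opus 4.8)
The plan is to convert the Nash fixed-point conditions into first-order (variational) identities for each of the two equilibria, combine them, and then contract using the positive definiteness already established in Lemma \ref{positive definite lemma}.

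First I would record the best-response characterization. By Lemma \ref{lm1} (and its mirror image for agent $Y$, obtained by interchanging $\eps_k$ and $1-\eps_k$), the best-response problem of $X$ to a fixed $\bm\eta$ is to minimize $\bm\xi\mapsto\bE[\tfrac12\bm\xi^\top\Gamma_\theta\bm\xi+\bm\xi^\top\wt\Gamma\bm\eta]$ over $\cX(X_0,\bT)$, and symmetrically for $Y$; this functional is strictly convex because $\Gamma_\theta$ is positive definite. The point to exploit is that $\cX(X_0,\bT)$ is affine in a strong sense: if $\bm\xi^0,\bm\xi\in\cX(X_0,\bT)$, then the entire line $\{\bm\xi^0+s(\bm\xi-\bm\xi^0):s\in\bR\}$ lies in $\cX(X_0,\bT)$, since adaptedness and condition (b) of Definition \ref{Strategies def} are preserved under affine combinations and boundedness is preserved for each fixed $s$. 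Thus, if $\bm\xi^0$ is a best response to $\bm\eta$, the objective restricted to this line is a parabola in $s$ minimized at $s=0$, and differentiating there (licit, since the integrand is a polynomial in $s$ bounded uniformly on compact $s$-sets) gives
$$\bE\big[(\bm\xi-\bm\xi^0)^\top(\Gamma_\theta\bm\xi^0+\wt\Gamma\bm\eta)\big]=0\qquad\text{for all }\bm\xi\in\cX(X_0,\bT),$$
together with the symmetric identity for a best response $\bm\eta^0$ of $Y$, involving $\Gamma_\theta\bm\eta^0+\wt\Gamma\bm\xi^0$.

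Next, let $(\bm\xi^1,\bm\eta^1)$ and $(\bm\xi^2,\bm\eta^2)$ be two Nash equilibria and set $\bm a:=\bm\xi^2-\bm\xi^1$, $\bm b:=\bm\eta^2-\bm\eta^1$. Applying the $X$-identity for the first equilibrium with test strategy $\bm\xi^2$ and for the second with test strategy $\bm\xi^1$ and adding yields $\bE[\bm a^\top\Gamma_\theta\bm a+\bm a^\top\wt\Gamma\bm b]=0$; the two $Y$-identities give $\bE[\bm b^\top\Gamma_\theta\bm b+\bm b^\top\wt\Gamma\bm a]=0$. Summing these and using $\wt\Gamma+\wt\Gamma^\top=\Gamma$ (so that $\bm a^\top\wt\Gamma\bm b+\bm b^\top\wt\Gamma\bm a=\bm a^\top\Gamma\bm b$) gives
$$\bE\big[\,\bm a^\top\Gamma_\theta\bm a+\bm b^\top\Gamma_\theta\bm b+\bm a^\top\Gamma\bm b\,\big]=0.$$
The integrand here is a deterministic quadratic form that is nonnegative and vanishes only at $(\bm 0,\bm 0)$: writing $\Gamma=L^\top L$ with $L$ invertible ($\Gamma$ being symmetric and positive definite by \eqref{G strictly positive definite}), one has $\bm a^\top\Gamma\bm a+\bm a^\top\Gamma\bm b+\bm b^\top\Gamma\bm b=\tfrac12(|L\bm a|^2+|L\bm b|^2+|L\bm a+L\bm b|^2)\ge0$, so the integrand equals this plus $2\theta(|\bm a|^2+|\bm b|^2)$ and is $\ge0$, with equality forcing $\bm a=\bm b=\bm 0$. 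A nonnegative random variable with vanishing expectation is a.s.\ zero, hence $\bm a=\bm b=\bm 0$ $\bP$-a.s., which is the claim.

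The only delicate point — hence the main obstacle — is the interplay of the randomness and adaptedness of strategies with the variational argument in the first step: one must check that the line through $\bm\xi^0$ stays admissible (which it does, because all defining constraints except boundedness are affine, and boundedness is kept for each fixed scaling) so that the first-order condition holds as an \emph{equality} rather than a one-sided inequality, and that differentiation under the expectation is justified. Everything beyond that is the elementary algebra above together with the positive-definiteness facts from Lemma \ref{positive definite lemma}.
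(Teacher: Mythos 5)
Your proof is correct, and it is essentially the same argument as the paper's. The paper packages the four first-order conditions into a single auxiliary convex function $f(\alpha)$ and derives a contradiction from $f'(0+)\ge 0$ versus $f'(0+)<0$, whereas you extract the first-order conditions directly as equalities (using that $\cX(X_0,\bT)$ is affine, so the optimality condition holds with equality rather than one-sidedly) and contract; but the key algebra — adding the two agents' variational identities, using $\wt\Gamma+\wt\Gamma^\top=\Gamma$ to reduce to the form $\bE[\bm a^\top\Gamma_\theta\bm a+\bm b^\top\Gamma_\theta\bm b+\bm a^\top\Gamma\bm b]$, and then exploiting positive definiteness of $\Gamma$ to conclude — is identical. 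Your version is arguably a touch cleaner in that it avoids the by-contradiction framing, but the two are the same proof in different clothing.
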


\begin{proof}
We assume by way of contradiction that there exist two distinct Nash equilibria $(\bm \xi^0,\bm \eta^0)$ and $(\bm \xi^1,\bm \eta^1)$ in $\cX(X_0,\bT)\times\cX(Y_0,\bT)$. Here, the fact that the two Nash equilibria are distinct means that they are not $\bP$-a.s.~equal. Then we define for $\alpha\in[0,1]$
$$
\bm \xi^{\alpha}:=\alpha\bm \xi^1+(1-\alpha)\bm \xi^0\qquad\text{and}\qquad \bm \eta^{\alpha}:=\alpha\bm \eta^1+(1-\alpha)\bm \eta^0.
$$
We furthermore let 
$$
f(\alpha):=\bE\Big[\,\cC_\bT(\bm \xi^{\alpha}|\bm \eta^0)+\cC_\bT(\bm \eta^{\alpha}|\bm \xi^0)+\cC_\bT(\bm \xi^{1-\alpha}|\bm \eta^1)+\cC_\bT(\bm \eta^{1-\alpha}|\bm \xi^1)\,\Big].
$$
Since according to \eqref{G strictly positive definite}
 the matrix $ \Gamma_\theta$ is  positive definite, the  functional
$$
\bm \xi\longmapsto\bE[\,\cC_\bT(\bm  \xi|\bm \eta)\,]=\bE\Big[\,\frac{1}{2}\bm \xi^\top\Gamma_\theta\bm \xi+\bm \xi^\top  \wt{{{\Gamma}}}\bm \eta\,\Big]
$$
is strictly convex with respect to $\bm \xi$. Since the two Nash equilibria $(\bm \xi^0,\bm \eta^0)$ and $(\bm \xi^1,\bm \eta^1)$ are distinct, $f(\alpha)$ must also be strictly convex in $\alpha$ and have its unique minimum in $\alpha=0$. That is,
\begin{equation*}
f(\alpha)>f(0)\quad\mbox{for }\alpha>0.
\end{equation*}
It follows that
\begin{equation}\label{fge0}
\lim_{h\downarrow0}\frac{f(h)-f(0)}{h}=\frac{df(\alpha)}{d\alpha}\Big|_{\alpha=0+}\ge0.
\end{equation}
Next, by the symmetry of $\Gamma_\theta$,
\begin{eqnarray*}\bE[\,\cC_\bT(\bm \xi^\alpha|\bm \eta)\,]&=&\bE\bigg[\,\frac{1}{2}\alpha^2(\bm \xi^1)^\top \Gamma_\theta\bm \xi^1+\alpha(1-\alpha)(\bm \xi^1)^\top \Gamma_\theta\bm \xi^0+\frac{1}{2}(1-\alpha)^2(\bm \xi^0)^\top \Gamma_\theta\bm \xi^0\\
&&\qquad\qquad\qquad+\alpha(\bm \xi^1)^\top \wt{{\Gamma}}\bm \eta+(1-\alpha)(\bm \xi^0)^\top \wt{{\Gamma}}\bm \eta\,\bigg].
\end{eqnarray*}
Therefore,
$$
\frac{d}{d\alpha}\Big|_{\alpha=0+}\bE[\,\cC_\bT(\bm \xi^\alpha|\bm \eta)\,]=\bE\Big[\,(\bm \xi^1-\bm \xi^0)^\top \Gamma_\theta\bm \xi^0+(\bm \xi^1-\bm \xi^0)^\top \wt{{\Gamma}}\bm \eta\,\Big].$$
Hence, it follows that
\begin{eqnarray*}
\lefteqn{\frac{d}{d\alpha}\Big|_{\alpha=0+}f(\alpha)}\\
&=&\bE\bigg[(\bm \xi^1-\bm \xi^0)^\top{\Gamma_\theta}\bm \xi^0+(\bm \xi^1-\bm \xi^0)^\top \wt{{\Gamma}}\bm \eta^0+(\bm \xi^0-\bm \xi^1)^\top{\Gamma_\theta}\bm \xi^1+(\bm \xi^0-\bm \xi^1)^\top \wt{{\Gamma}}\bm \eta^1\\
&&\qquad +(\bm \eta^1-\bm \eta^0)^\top{\Gamma_\theta}\bm \eta^0+(\bm \eta^1-\bm \eta^0)^\top \wt{{\Gamma}}\bm \xi^0+(\bm \eta^0-\bm \eta^1)^\top{\Gamma_\theta}\bm \eta^1+(\bm \eta^0-\bm \eta^1)^\top \wt{{\Gamma}}\bm \xi^1\bigg]\\
&=&-\bE\bigg[(\bm \xi^1-\bm \xi^0)^\top{\Gamma_\theta}(\bm \xi^1-\bm \xi^0)+(\bm \eta^1-\bm \eta^0)^\top{\Gamma_\theta}(\bm \eta^1-\bm \eta^0)\bigg]\\
&&+\bE\bigg[(\bm \xi^1-\bm \xi^0)^\top \wt{{\Gamma}}(\bm \eta^0-\bm \eta^1)+(\bm \xi^1-\bm \xi^0)^\top \wt{{\Gamma}}^\top(\bm \eta^0-\bm \eta^1)\bigg]\\
&=&-\bE\bigg[(\bm \xi^1-\bm \xi^0)^\top{\Gamma_\theta}(\bm \xi^1-\bm \xi^0)+(\bm \eta^1-\bm \eta^0)^\top{\Gamma_\theta}(\bm \eta^1-\bm \eta^0)\bigg]
-\bE\bigg[(\bm \xi^1-\bm \xi^0)^\top   {\Gamma}(\bm \eta^1-\bm \eta^0)\bigg].
\end{eqnarray*}
Now,
\begin{eqnarray*}
\lefteqn{(\bm \xi^1-\bm \xi^0)^\top{  {\Gamma}}(\bm \eta^1-\bm \eta^0)+\frac{1}{2}\Big((\bm \xi^1-\bm \xi^0)^\top{\Gamma_\theta}(\bm \xi^1-\bm \xi^0)+(\bm \eta^1-\bm \eta^0)^\top{\Gamma_\theta}(\bm \eta^1-\bm \eta^0)\Big)}\\
&&\qquad\ge\frac12\Big((\bm \xi^1-\bm \xi^0+\bm \eta^1-\bm \eta^0)^\top  {\Gamma}(\bm \xi^1-\bm \xi^0+\bm \eta^1-\bm \eta^0)\Big)\ge0.
\end{eqnarray*}
Thus, and because the two  Nash equilibria $(\bm \xi^0,\bm \eta^0)$ and $(\bm \xi^1,\bm \eta^1)$ are distinct, we have
$$
\frac{d}{d\alpha}\Big|_{\alpha=0+}f(\alpha)\le-\frac12\bE\bigg[(\bm \xi^1-\bm \xi^0)^\top  {\Gamma}(\bm \xi^1-\bm \xi^0)+(\bm \eta^1-\bm \eta^0)^\top  {\Gamma}(\bm \eta^1-\bm \eta^0)\bigg]<0,
$$
which contradicts \eqref{fge0}. Therefore, there can exist at most one Nash equilibrium in the class $\cX(X_0,\bT)\times\cX(Y_0,\bT)$.
\end{proof}

Now let us introduce the class
$$\cX_{\text{\rm det}}(Z_0,\bT):=\Big\{\bm \zeta\in \cX(Z_0,\bT)\,\Big|\,\text{$\bm \zeta$ is deterministic}\Big\}
$$
of deterministic strategies in $\cX(Z_0,\bT)$. A Nash equilibrium in the class $\cX_{\text{\rm det}}(X_0,\bT)\times\cX_{\text{\rm det}}(Y_0,\bT)$ is defined in the same way as in Definition~\ref{Nash Def}. 

\begin{lemma}\label{deterministic lemma}
A Nash equilibrium in the class $\cX_{\text{\rm det}}(X_0,\bT)\times\cX_{\text{\rm det}}(Y_0,\bT)$ of deterministic strategies is also a Nash equilibrium in the class $\cX(X_0,\bT)\times\cX(Y_0,\bT)$ of adapted strategies.
\end{lemma}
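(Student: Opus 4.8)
The plan is to reduce, for each of the two agents separately, the minimization over adapted strategies to the minimization over deterministic strategies by means of a convexity argument and Jensen's inequality. Fix a \emph{deterministic} strategy $\bm\eta\in\cX_{\text{\rm det}}(Y_0,\bT)$ for the opponent of agent $X$. By Lemma \ref{lm1}, for every $\bm\xi\in\cX(X_0,\bT)$ we have $\bE[\cC_\bT(\bm\xi\,|\,\bm\eta)]=\bE[g(\bm\xi)]$, where
$$g(\bm x):=\tfrac12\bm x^\top\Gamma_\theta\bm x+\bm x^\top\wt\Gamma\bm\eta,\qquad \bm x\in\bR^{N+1},$$
is a genuine deterministic function of $\bm x$; here it is essential that $\bm\eta$ is deterministic, so that $\bm x\mapsto\bm x^\top\wt\Gamma\bm\eta$ is an honest linear functional. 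Since $\Gamma_\theta=\Gamma+2\theta\Id$ is symmetric and, by Lemma \ref{positive definite lemma}, positive definite, the function $g$ is convex on $\bR^{N+1}$.

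Next I would pass from $\bm\xi$ to its expectation $\bbar{\bm\xi}:=\bE[\bm\xi]\in\bR^{N+1}$, the vector with components $\bE[\xi_0],\dots,\bE[\xi_N]$, which is well-defined since each $\xi_i$ is bounded. This vector is deterministic and bounded, and by condition (b) of Definition \ref{Strategies def} together with linearity of expectation we have $\bE[\xi_0]+\cdots+\bE[\xi_N]=\bE[\xi_0+\cdots+\xi_N]=X_0$, so $\bbar{\bm\xi}\in\cX_{\text{\rm det}}(X_0,\bT)$. Applying the vector-valued Jensen inequality to the convex function $g$ and the bounded random vector $\bm\xi$ yields
$$\bE[\cC_\bT(\bm\xi\,|\,\bm\eta)]=\bE[g(\bm\xi)]\ge g(\bE[\bm\xi])=g(\bbar{\bm\xi})=\bE[\cC_\bT(\bbar{\bm\xi}\,|\,\bm\eta)].$$
Since $\bbar{\bm\xi}$ lies in $\cX_{\text{\rm det}}(X_0,\bT)$, this shows that for deterministic $\bm\eta$,
$$\inf_{\bm\xi\in\cX(X_0,\bT)}\bE[\cC_\bT(\bm\xi\,|\,\bm\eta)]=\inf_{\bm\xi\in\cX_{\text{\rm det}}(X_0,\bT)}\bE[\cC_\bT(\bm\xi\,|\,\bm\eta)].$$
The same argument with the roles of $X$ and $Y$ (and of $\bm\xi$ and $\bm\eta$) interchanged gives the analogous identity for agent $Y$ against a deterministic $\bm\xi$.

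It then remains to combine the two identities. If $(\bm\xi^*,\bm\eta^*)\in\cX_{\text{\rm det}}(X_0,\bT)\times\cX_{\text{\rm det}}(Y_0,\bT)$ is a Nash equilibrium within the deterministic class, then $\bm\eta^*$ is deterministic and $\bE[\cC_\bT(\bm\xi^*\,|\,\bm\eta^*)]$ equals the infimum of $\bE[\cC_\bT(\cdot\,|\,\bm\eta^*)]$ over $\cX_{\text{\rm det}}(X_0,\bT)$, hence, by the displayed identity, also over the larger class $\cX(X_0,\bT)$; symmetrically for $\bm\eta^*$ given the deterministic $\bm\xi^*$. Thus $(\bm\xi^*,\bm\eta^*)$ satisfies both defining inequalities of Definition \ref{Nash Def} within $\cX(X_0,\bT)\times\cX(Y_0,\bT)$, which is the assertion.

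There is essentially no substantial obstacle here; the only points requiring a little care are checking that $\bbar{\bm\xi}=\bE[\bm\xi]$ is again an admissible deterministic strategy and justifying the multivariate Jensen inequality, which is legitimate because $g$ is a finite convex function on $\bR^{N+1}$ and $\bm\xi$ is bounded. Note that plain convexity of $g$ — not the strict convexity invoked in the uniqueness Lemma \ref{uniqueness lemma} — is all that is needed for this step.
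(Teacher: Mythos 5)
Your proposal is correct and takes essentially the same approach as the paper: fix the deterministic opponent strategy, average a general adapted strategy to get $\bbar{\bm\xi}=\bE[\bm\xi]\in\cX_{\text{\rm det}}(X_0,\bT)$, and apply Jensen's inequality via the convexity furnished by Lemma \ref{positive definite lemma}. The only cosmetic difference is that you apply Jensen to the full convex functional $g$, whereas the paper applies it just to the quadratic form $\bm x\mapsto\bm x^\top\Gamma_\theta\bm x$ and observes directly that the linear term is unchanged under taking expectations; these are the same computation.
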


\begin{proof}
Assume that $(\bm \xi^*,\bm \eta^*)$ is a Nash equilibrium in the class $\cX_{\text{\rm det}}(X_0,\bT)\times\cX_{\text{\rm det}}(Y_0,\bT)$ of deterministic strategies. We need to show that $\bm \xi^*$ minimizes $\bE[\,\cC_\bT(\bm \xi|\bm \eta^*)\,]$  and $\bm \eta^*$ minimizes $\bE[\,\cC_\bT(\bm \eta|\bm \xi^*)\,]$ in the respective classes $\cX(X_0,\bT)$ and $\cX(Y_0,\bT)$ of adapted strategies. To this end, let $\bm \xi\in \cX(X_0,\bT)$ be given. We define  $\bbar{\bm \xi}\in \cX_{\text{\rm det}}(X_0,\bT)$ by $\bbar{\xi}_k=\bE[\,\xi_k\,]$ for $k=0,1,\dots,N$.

Applying  Jensen's inequality to the convex function $\bR^{N+1}\ni {\bm x}\mapsto  {\bm x}^\top \Gamma_\theta {\bm x}$, we obtain 
\begin{eqnarray*}
\bE[\,\cC_\bT(\bm \xi|\bm \eta^*)\,]&=&\bE\Big[\frac12\bm \xi^\top \Gamma_\theta\bm \xi+\bm \xi^\top  \wt{{\Gamma}}\bm \eta^*\Big]=\bE\Big[\frac12\bm \xi^\top \Gamma_\theta\bm \xi\Big]+\bbar{\bm \xi}^\top  \wt{{\Gamma}}\bm \eta^*\\
&\ge&\frac12\bbar{\bm \xi}^\top \Gamma_\theta\bbar{\bm \xi}+\bbar{\bm \xi}^\top  \wt{{\Gamma}}\bm \eta^*=\bE[\,\cC_\bT(\bbar{\bm \xi}|\bm \eta^*)\,]\\
&\ge&\bE[\,\cC_\bT( \bm \xi^*|\bm \eta^*)\,].
\end{eqnarray*}
This shows that $\bm \xi^*$ minimizes $\bE[\,\cC_\bT(\bm \xi|\bm \eta^*)\,]$ over $\bm \xi\in \cX(X_0,\bT)$. One can show analogously that $\bm \eta^*$ minimizes $\bE[\,\cC_\bT(\bm \eta|\bm \xi^*)\,]$  over $\bm \eta\in\cX(Y_0,\bT)$, which completes the proof.
\end{proof}

\begin{remark}Before proving Theorem~\ref{th1}, we briefly explain how to derive heuristically the explicit form \eqref{equilibrium strategies}  of the equilibrium strategies. By Lemma~\ref{lm1} and the method of Lagrange multipliers, a necessary condition for  $( {\bm \xi}^*, {\bm \eta}^*)$ to be a Nash equilibrium in $\cX_{\text{\rm det}}(X_0,\bT)\times\cX_{\text{\rm det}}(Y_0,\bT)$  is the existence of  $\alpha,\beta\in\bR$, such that
\begin{equation}\label{th1eqn}
\left\{
\begin{aligned}
&\Gamma_\theta {\bm \xi}^*+ \wt{{{\Gamma}}} {\bm \eta}^*=\alpha{\bm 1};\\
&\Gamma_\theta {\bm \eta}^*+ \wt{{{\Gamma}}} {\bm \xi}^*=\beta{\bm 1}.
\end{aligned}
\right.
\end{equation}
By adding the  equations in \eqref{th1eqn} we obtain
\begin{equation}\label{les}
(\Gamma_\theta+ \wt {\Gamma})( {\bm \xi}^*+ {\bm \eta}^*)=(\alpha+\beta){\bm 1}.
\end{equation}
By Lemma~\ref{positive definite lemma}, the matrix $\Gamma_\theta+ \wt {\Gamma}$ is positive   and hence invertible, so that \eqref{les} can be solved for $ {\bm \xi}^*+ {\bm \eta}^*$.  Since we must also have ${\bm 1}^\top( {\bm \xi}^*+ {\bm \eta}^*)=X_0+Y_0$, we obtain 
$$
 {\bm \xi}^*+ {\bm \eta}^*=\frac{(X_0+Y_0)}{{\bm 1}^\top (\Gamma_\theta+ \wt {\Gamma})^{-1}{\bm 1}}(\Gamma_\theta+ \wt {\Gamma})^{-1}{\bm 1}=(X_0+Y_0)\bm v.
$$
Similarly, by subtracting the  two equations in  \eqref{th1eqn} yields 
$$
(\Gamma_\theta- \wt{{{\Gamma}}})( {\bm \xi}^*- {\bm \eta}^*)=(\alpha-\beta){\bm 1}.
$$
It follows again from Lemma~\ref{positive definite lemma} that $(\Gamma_\theta- \wt{{{\Gamma}}})$ is invertible, and so we have 
$$
 {\bm \xi}^*- {\bm \eta}^*=\frac{(X_0-Y_0)}{{\bm 1}^T(\Gamma_\theta- \wt{{{\Gamma}}})^{-1}{\bm 1}}(\Gamma_\theta- \wt{{{\Gamma}}})^{-1}{\bm 1}=(X_0-Y_0)\bm w.
$$
Thus, $ {\bm \xi}^*$ and $ {\bm \eta}^*$ ought to  be given by \eqref{equilibrium strategies}.
\end{remark}

\bigskip

\begin{proof}[Proof of Theorem~\ref{th1}] By Lemmas~\ref{uniqueness lemma} and~\ref{deterministic lemma} all we need to show is that \eqref{equilibrium strategies} defines a Nash equilibrium in the class $\cX_{\text{\rm det}}(X_0,\bT)\times\cX_{\text{\rm det}}(Y_0,\bT)$ of deterministic strategies. For $(\bm \xi,\bm \eta)\in \cX_{\text{\rm det}}(X_0,\bT)\times\cX_{\text{\rm det}}(Y_0,\bT)$ we have
\begin{equation}\label{det exp cost}
\bE[\,\cC_\bT(\bm \xi|\bm \eta)\,]=\frac12\bm \xi^\top \Gamma_\theta\bm \xi+\bm \xi^\top \wt {\Gamma}\bm \eta.
\end{equation}
 Therefore minimizing $\bE[\,\cC_\bT(\bm \xi|\bm \eta)\,]$ over $\bm \xi\in \cX_{\text{\rm det}}(X_0,\bT)$ is equivalent to the minimization of the quadratic form on the right-hand side of \eqref{det exp cost} over $\bm \xi\in\bR^{N+1}$ under the constraint ${\bm 1}^\top \bm \xi=X_0$.

Now we  prove that the strategies $\bm \xi^*$ and $\bm \eta^*$ given by \eqref{equilibrium strategies} are indeed optimal.  We have
\begin{equation*}
\Gamma_\theta\bm \xi^*+ \wt {\Gamma}\bm \eta^*=\frac12(X_0+Y_0)(\Gamma_\theta+ \wt {\Gamma})\bm v+\frac12(X_0-Y_0)(\Gamma_\theta- \wt {\Gamma})\bm w=\mu\bm 1,
\end{equation*}
where 
$$\mu=\frac{(X_0+Y_0)}{2\bm 1^\top (\Gamma_\theta+ \wt {\Gamma})\bm 1}+\frac{(X_0-Y_0)}{2\bm 1^\top (\Gamma_\theta- \wt {\Gamma})\bm 1}.
$$
Now let $\bm \xi\in \cX_{\text{\rm det}}(X_0,\bT)$ be arbitrary and define $\bm \zeta:=\bm \xi-\bm \xi^*$. Then we have $\bm \zeta^\top\bm1=0$. Hence,  by the symmetry of $\Gamma_\theta$,
\begin{eqnarray*}
\frac12\bm \xi^\top \Gamma_\theta\bm \xi+\bm \xi^\top \wt {\Gamma}\bm \eta^*&=&\frac12(\bm \xi^*)^\top \Gamma_\theta\bm \xi^*+\frac12\bm \zeta^\top \Gamma_\theta\bm \zeta+\bm \zeta^\top \Gamma_\theta\bm \xi^*+(\bm \xi^*)^\top \wt {\Gamma}\bm \eta^*+\bm \zeta^\top \wt {\Gamma}\bm \eta^*\\
&=&\frac12(\bm \xi^*)^\top \Gamma_\theta\bm \xi^*+(\bm \xi^*)^\top \wt {\Gamma}\bm \eta^*+\frac12\bm \zeta^\top \Gamma_\theta\bm \zeta+\mu\bm \zeta^\top\bm1\\
&\ge&\frac12(\bm \xi^*)^\top \Gamma_\theta\bm \xi^*+(\bm \xi^*)^\top \wt {\Gamma}\bm \eta^*,
\end{eqnarray*}
where in the last step we have used that $\Gamma_\theta$ is positive definite and that $\bm \zeta^\top\bm1=0$. Therefore $\bm \xi^*$ minimizes \eqref{det exp cost} in the class $\cX_{\text{\rm det}}(X_0,\bT)$ for $\bm \eta=\bm \eta^*$. In the same way, one shows that $\bm \eta^*$ minimizes $\bE[\,\cC_\bT(\bm \eta|\bm \xi^*)\,]$ over $\bm \eta\in\cX_{\text{\rm det}}(X_0,\bT)$.
\end{proof}

\begin{proof}[Proof of Proposition~\ref{linear transaction costs prop}]  Following Lemma~\ref{lm1}, the expected cost functional with $x\mapsto\tau(|x|)$ replacing $x\mapsto\theta x^2$ is given by
$$\bE[\,\bbar\cC_{\bT}(\bm\xi|\bm \eta)\,]:=\bE\bigg[\,\bm\xi^\top\Gamma\bm\xi+\bm\xi^\top\wt\Gamma\bm\eta+\sum_{k=0}^N\tau(|\xi_k|)\,\bigg], \quad\bm\xi\in\cX(X_0,\bT),\ \bm\eta\in\cX(Y_0,\bT).
$$

Now let $\bm\xi^*$ and $\bm\eta^*$ be as in Theorem~\ref{th1}. Since both $\bm\xi^*$ and $\bm\eta^*$ are deterministic,  $|\xi^*_k|$ and $|\eta_k^*|$ take  just finitely many values as $k$ ranges from 0 to $N$. After adding the value 0 to this list and arranging it in increasing order, the values from that list correspond to numbers  $0=c_0< c_1<c_2<\cdots <c_{M-1}$. Then we take $c_{M}:=c_{M-1}+1$ and let $\tau:[0,\infty)\to [0,\infty)$ be the linear interpolation of the function $ x\mapsto\theta x^2$  with respect to the grid $c_0, c_1,\dots, c_{M}$ and with linear continuation beyond $[c_{M-1},c_{M}]$. Then $\tau(|\xi_k^*|)=\theta(\xi_k^*)^2$ and $\tau(|\eta_k^*|)=\theta(\eta_k^*)^2$ holds for all $k$, and it follows that  $\bE[\,\bbar\cC_{\bT}(\bm\xi^*|\bm \eta^*)\,]=\bE[\,\cC_{\bT}(\bm\xi^*|\bm \eta^*)\,]$. 
 
 Let us now suppose by way of contradiction that $(\bm\xi^*,\bm\eta^*)$ is not a Nash equilibrium in $\cX(X_0,\bT)\times \cX(Y_0,\bT)$. Then there  exist $\bm\xi\in \cX(X_0,\bT)$ or $\bm\eta\in \cX(Y_0,\bT)$ such that $\bE[\,\bbar\cC_{\bT}(\bm\xi|\bm \eta^*)\,]<\bE[\,\bbar\cC_{\bT}(\bm\xi^*|\bm \eta^*)\,]$ or $\bE[\,\bbar\cC_{\bT}(\bm\eta|\bm \xi^*)\,]<\bE[\,\bbar\cC_{\bT}(\bm\eta^*|\bm \xi^*)\,]$. By symmetry, it is sufficient to consider only the first possibility. For $\alpha\in[0,1]$, let $\bm\xi^\alpha:=(1-\alpha)\bm\xi^*+\alpha\bm\xi$. By the convexity of the expected cost functional, we have $\bE[\,\bbar\cC_{\bT}(\bm\xi^\alpha|\bm \eta^*)\,]<\bE[\,\bbar\cC_{\bT}(\bm\xi^*|\bm \eta^*)\,]$ for all $\alpha\in(0,1]$. By using the boundedness of admissible strategies (Definition~\ref{Strategies def} (a)), there is $\eps\in(0,1]$ such  that 
$|\xi^\eps_k|\le c_{M}$  for  $k=0,\dots, N$ $\bP$-a.s. Thus, the convexity of $x\mapsto\theta x^2$ implies that $\tau(|\xi^\eps_k|)\ge\theta(\xi^\eps_k)^2$ $\bP$-a.s. for $k=0,\dots, N$. Hence, 
$$\bE[\,\bbar\cC_{\bT}(\bm\xi^\eps|\bm \eta^*)\,]\ge \bE[\,\cC_{\bT}(\bm\xi^\eps|\bm \eta^*)\,]>\bE[\,\cC_{\bT}(\bm\xi^*|\bm \eta^*)\,]=\bE[\,\bbar\cC_{\bT}(\bm\xi^*|\bm \eta^*)\,],
$$
which is the desired contradiction. 
\end{proof}

\subsection{Proof of Propositions~\ref{three oscillations prop} and~\ref{oscillations prop}}

\begin{proof}[Proof of Proposition~\ref{three oscillations prop}] According to \eqref{v and w def} and Lemma~\ref{positive definite lemma}, the vector $\bm w$ is a positive multiple of $(\Gamma_\theta-\wt \Gamma)^{-1}\bm1$. The matrix $\Gamma_\theta-\wt \Gamma$ is an invertible upper triangular matrix, whose diagonal entries are all equal to $\nu:=G(0)/2+2\theta$. We may assume without loss of generality that $\nu=1$; otherwise we divide $G$ by $\nu$. Then we will have $G(0)>1$, and there exists $\delta_1>0$ such that also $G(\delta_1)>1$. Now we take $\delta\le\delta_1$ such that $G$ is nonincreasing in $[0,2\delta]$. The off-diagonal elements of $\Gamma_\theta-\wt \Gamma$ are equal to  $\Gamma_{i,j}=G(t_{j-1}-t_{i-1})$ for $i<j$ and they vanish for $i>j$. Let $\bm u=(u_1,\dots, u_{N+1})^\top=(\Gamma_\theta-\wt \Gamma)^{-1}\bm1$. A straightforward computation shows that 
$$u_{N+1}=1,\quad u_N=1-\Gamma_{N,N+1},\quad\text{and}\quad u_{N-1}=1-\Gamma_{N-1,N+1}+\Gamma_{N-1,N}(\Gamma_{N,N+1}-1).
$$
Clearly, $u_{N+1}>0$ holds trivially. Next, due to our choice of $\delta$, we have $\Gamma_{i-1,i}=G(t_i-t_{i-1})>1$ for $i=N,N+1$. In particular, $u_N<0$ follows. Moreover, 
$$u_{N-1}>1-\Gamma_{N-1,N+1}+(\Gamma_{N,N+1}-1)=G(t_N-t_{N-1})-G(t_N-t_{N-2})\ge0,
$$
where the latter inequality follows from the assumption that $G$ is nonincreasing in $[0,2\delta]$.
\end{proof}

\noindent{\it Proof of Proposition~\ref{oscillations prop}.} Recall that here   $G(t)=\lambda e^{-\rho t}$ for constants $\lambda,\rho>0$. We need to compute the inverse of the matrix ${{\Gamma_\theta}}- \wt{{{\Gamma}}}$. Setting $\kappa:=2\theta/\lambda+\frac12$ and $a:=e^{-\rho T}$, we have
$$ {{\Gamma_\theta}}- \wt{{{\Gamma}}}=\lambda\begin{pmatrix}
  \,&\kappa&a^{\frac{1}{N}}&a^{\frac{2}{N}}&\cdots&a^{\frac{N-1}{N}}&a \,{} \\
     &0&\kappa&a^{\frac{1}{N}}&\cdots&a^{\frac{N-2}{N}}&a^{\frac{N-1}{N}} \,{} \\
     &0&\ddots&\ddots&\ddots&\ddots&\vdots\,{} \\
     &\vdots&\ddots&\ddots&\ddots&\ddots&\vdots\,{} \\
     &\vdots&\ddots&\ddots&\ddots&\kappa&a^{\frac{1}{N}}\,{} \\
     &0&\cdots&\cdots&\cdots&0&\kappa\,{}
 \end{pmatrix}.
$$
It is easy to verify that the inverse of this matrix is given by
\begin{equation*}
\Pi_N:=\frac1\lambda\begin{pmatrix}
  \,&\frac1\kappa&\frac{-a^{\frac{1}{N}}}{\kappa^2}&\frac{-a^{\frac{2}{N}}(\kappa-1)}{\kappa^3}&\cdots&\frac{-a^{\frac{N-1}{N}}(\kappa-1)^{N-2}}{\kappa^N}&\frac{-a^{\frac{N}{N}}(\kappa-1)^{N-1}}{\kappa^{N+1}} \,{} \\
     &0&\frac1\kappa&\frac{-a^{\frac{1}{N}}}{\kappa^2}&\cdots&\frac{-a^{\frac{N-2}{N}}(\kappa-1)^{N-3}}{\kappa^{N-1}}&\frac{-a^{\frac{N-1}{N}}(\kappa-1)^{N-2}}{\kappa^N}\,{} \\
     &0&\ddots&\ddots&\ddots&\ddots&\vdots\,{} \\
     &\vdots&\ddots&\ddots&\ddots&\ddots&\vdots\,{} \\
     &\vdots&\ddots&\ddots&\ddots&\frac1\kappa&\frac{-a^{\frac{1}{N}}}{\kappa^2}\,{} \\
     &0&\cdots&\cdots&\cdots&0&\frac1\kappa\,{}
 \end{pmatrix}.
\end{equation*}
Let us denote by ${\bm u}=(u_1,u_2,\dots,u_{N+1})\in\bR^{N+1}$ the vector $\lambda\Pi_N\bm1$. Then we have $u_{N+1}=\frac1\kappa$ and, for $n=1,\dots, N$, $u_n=u_{n+1}-a^{(N+1-n)/N}(\kappa-1)^{N-n}/\kappa^{N+2-n}$. That is, 
\begin{equation}\label{un eq}
\begin{split}
u_n&=\frac1\kappa-\frac{a^{\frac1N}}{\kappa^2}\sum_{m=n}^N\Big(\frac{a^{\frac1N}(\kappa-1)}{\kappa}\Big)^{N-m}=\frac1\kappa-\frac{a^{\frac1N}}{\kappa^2}\sum_{k=0}^{N-n}\Big(\frac{a^{\frac1N}(\kappa-1)}{\kappa}\Big)^{k}\\
&=\frac1\kappa\bigg[1-\frac{a^{\frac1N}}{\kappa(1-a^{\frac1N})+a^{\frac1N}}+(-1)^{N+1-n}\frac{a^{\frac1N}}{\kappa(1-a^{\frac1N})+a^{\frac1N}}\Big(\frac{a^{\frac1N}(1-\kappa)}{\kappa}\Big)^{N+1-n}\bigg].
\end{split}
\end{equation}

 If $\theta=0$, we have 
\begin{equation*}
u_n=2\bigg[1-\frac{2a^{\frac1N}}{1+a^{\frac1N}}+(-1)^{N+1-n}\frac{{2a^{\frac{N+2-n}N}}}{1+a^{\frac1N}}\bigg].
\end{equation*}
Since $a<1$, we have
$$0\le1-\frac{2a^{\frac1N}}{1+a^{\frac1N}}<1-a^{\frac1N}\longrightarrow0\qquad\text{as $N\ua\infty$.}
$$
On the other hand, we have  
$$\frac{{2a^{\frac{N+2-n}N}}}{1+a^{\frac1N}}\ge a^{\frac{N+2-n}N}\ge a^{\frac{N+1}N}\longrightarrow a\qquad\text{as $N\ua\infty$.}
$$
Therefore, the signs of $u_n$ will alternate as soon as $N$ is large enough to have $1-a^{\frac1N}<a^{\frac{N+1}N}$. This proves part (a). As for part (b), since the expression \eqref{un eq} is continuous in $\kappa$, the signs of $u_n$ will still alternate if, for fixed $N\ge N_0$, we take $\kappa$ slightly larger than $1/2$. (Note however that the term $(1-\kappa)^N/\kappa^N$ tends to zero faster than $1-a^{\frac1N}$, so we cannot get this result uniformly in $N$). 
\qed

\subsection{Proof of Theorem~\ref{nonosc thm}}

\begin{proof}[Proof of (a)$\Rightarrow$(b) in Theorem~\ref{nonosc thm}.] It is well known and easy to see that $G(0)\ge G(t)$ for all $t\ge0$, due to our assumption that the function $G(|\cdot|)$ is positive definite. The log-convexity of $G$ therefore implies that $G$ must be nonincreasing in a neighborhood of zero. Therefore, Proposition~\ref{three oscillations prop} is applicable. It implies that $\bm w$ must have some components with negative sign if $\theta<\theta^*$. This yields the assertion.
\end{proof}

The proof of the implication (b)$\Rightarrow$(a) in Theorem~\ref{nonosc thm} relies on the following classical result on the signs of power series, which is due to Kaluza~\cite{Kaluza} and Szeg\H o~\cite{Szego}. Here we state it in the formulation of Jurkat~\cite[Theorem 3]{Jurkat}.

\begin{theorem}[Kaluza sign criterion]\label{Kaluza thm} For $n\ge0$, let $a_n>0$ be coefficients in the power series $f(x)=\sum_{n=0}^\infty a_nx^n$ satisfying the condition that $a_{n+1}/a_n$ is nondecreasing in $n\ge0$. Then the coefficients $b_n$ of the formal reciprocal power series
$$\frac1{f(x)}=\sum_{n=0}^\infty b_nx^n
$$
satisfy $b_0=1/a_0>0$ and $b_n\le0$ for $n\ge1$. If, moreover, the power series for $f$ is convergent for  $|x|<1$, then it follows that $\lim_{x\ua 1}\frac1{f(x)}=\sum_{n=0}^\infty b_n$ exists and is nonnegative.
\end{theorem}

This result is connected with our situation as follows. Let $(a_n)_{n\ge0}$ be a sequence of numbers such that $a_0>0$ and consider the upper triangular Toeplitz matrix $A=(\wt a_{i,j})_{i,j=1,\dots, N}$ with coefficients $\wt a_{i,j}=a_{j-i}$ if $i\le j$ and $\wt a_{i,j}=0$ otherwise. The inverse $B=A^{-1}$ is then also an upper triangular Toeplitz matrix. It is generated by the sequence $(b_n)_{n\ge0}$ that satisfies $b_0=1/a_0$ and is otherwise determined recursively through the convolution identities
\begin{equation*}
\sum_{k=0}^m a_kb_{m-k}=0,\qquad m\ge1.
\end{equation*}
But these conditions also determine the coefficients $(b_n)_{n\ge0}$  of the (formal) reciprocal of the power series $\sum_{n=0}^\infty a_nx^n$, so that there is a one-to-one correspondence between the inversion of triangular Toeplitz matrices and the formal development of  reciprocal power series; see~\cite{Trench3}. 

\begin{proof}[Proof of (b)$\Rightarrow$(a) in Theorem~\ref{nonosc thm}.] Let $a_0=G(0)/2+2\theta$ and $a_n=G(nT/N)$ for $n\ge1$. Then the matrix $\Gamma_\theta-\wt\Gamma$ is equal to the upper triangular Toeplitz matrix constructed as above from the sequence $(a_n)_{n\ge0}$. Clearly, we have $a_n>0$ for all $n$, and the fact that $G$ is log-convex implies that $a_{n+1}/a_n$ is nondecreasing in $n\ge1$. If $\theta\ge\theta^*$, then we will also have $a_1/a_0\le a_2/a_1$. Moreover, the fact that $G$ is positive definite implies once again that $G(t)\le G(0)$ for all $t$ so that the sequence $(a_n)_{n\in\bN}$ is bounded and the power series $\sum_{n=0}^\infty a_nx^n$ converges for $|x|<1$. It follows that we may apply all parts of Theorem~\ref{Kaluza thm}. It yields that the coefficients $(b_n)_{n\ge0}$ satisfy $b_0>0$, $b_n\le0$ for $n\ge1$, and that $\sum_{n=0}^\infty b_n$ exists and is nonnegative. Therefore, we must have $\sum_{n=0}^k b_n\ge 0$ for all $k\ge0$. But these sums coincide with the components of the vector $(\Gamma_\theta-\wt\Gamma)^{-1}\bm1$, which is in turn proportional to $\bm w$.
\end{proof}

\medskip

\begin{proof}[Proof of {\rm (c)}$\Rightarrow${\rm (b)} in Theorem~\ref{nonosc thm}]
We consider the case  $N=1$. By definition,  $\bm v$ is proportional to the vector
   $$2\det(\Gamma_\theta+\wt\Gamma)(\Gamma_\theta+\wt\Gamma)^{-1}\bm1=\left(
\begin{array}{c}
\lambda  (3 -2 a)+ \gamma +4 \theta\\
\lambda  (3-4a)- \gamma +4\theta\end{array}
\right). $$
Clearly, the first component of this vector is positive for all $a\in(0,1)$ and $\theta\ge0$. By sending $a\ua1$ one sees, however, that the second component is negative for  $\theta<\theta^*$ and  $a$ sufficiently close to 1. Thus, we cannot have $\bm v\ge0$ in this case.
\end{proof}

Now we prepare for the proof of the implication {\rm (b)}$\Rightarrow${\rm (c)} in Theorem~\ref{nonosc thm}.
It relies on results for so-called $M$-matrices stated in the book~\cite{BermanPlemmons} by Berman and Plemmons. We first introduce some notations. If $A$ is a matrix or vector, we will write
\begin{enumerate}
\item $A\ge0$ if each entry of $A$ is nonnegative;
\item $A>0$ if $A\ge0$ and at least one entry is strictly positive;
\item $A\gg0$ if each entry of $A$ is strictly positive.
\end{enumerate}

\begin{definition}[Definition 1.2 in Chapter 6 of~\cite{BermanPlemmons}]
A matrix $A\in\bR^{n\times n}$  is called  a nonsingular \emph{$M$-matrix} if it is of the form
$
A=s\Id-B$, where the matrix $B\in \bR^{n\times n}$ satisfies $B\ge0$ and the parameter $s>0$ is strictly larger than the  spectral radius of $B$.
\end{definition}

Also recall that a matrix $A\in\bR^{n\times n}$ is called  a \emph{$Z$-matrix} if all its off-diagonal elements are nonpositive. 
Berman and Plemmons~\cite{BermanPlemmons} give 50 equivalent characterizations of the fact that a given $Z$-matrix is  a nonsingular $M$-matrix. We will need three of them here and summarize them in the following statement.

\begin{theorem}[From Theorem 2.3  in Chapter 6 of~\cite{BermanPlemmons}]\label{mm}
For a $Z$-matrix $A\in\bR^{n\times n}$,  the following conditions are equivalent.
\begin{enumerate}
\item $A$ is a nonsingular $M$-matrix;
\item All the leading principal minors of $A$ are positive.
\item $A$ is inverse-positive; that is, $A^{-1}$ exists and $A^{-1}\ge0$.
\item $A+\alpha\Id$ is nonsingular for all $\alpha\ge0$.
\end{enumerate}
\end{theorem}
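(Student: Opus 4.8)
\bigskip\noindent\textbf{Proof proposal.} Theorem~\ref{mm} is quoted from \cite{BermanPlemmons}, so I only outline how its equivalences can be obtained. The plan is to prove the cycle (a)$\Rightarrow$(b)$\Rightarrow$(c)$\Rightarrow$(d)$\Rightarrow$(a) using two classical facts: the Perron--Frobenius theorem (a nonnegative matrix $B$ has $\rho(B)$ as an eigenvalue, with a nonzero eigenvector $\ge0$) and the Neumann series $(\Id-M)^{-1}=\sum_{k\ge0}M^k$, valid when $\rho(M)<1$. The structural remark behind everything is that a $Z$-matrix $A$ can be written as $A=s\Id-B$ with $B:=s\Id-A\ge0$ for any $s\ge\max_ia_{ii}$ (nonpositivity of the off-diagonal entries of $A$ is exactly what makes $B$ nonnegative), so the eigenvalues of $A$ are $\{s-\mu:\mu\in\sigma(B)\}$, and by Perron--Frobenius the eigenvalue of $A$ of smallest real part equals the \emph{real} number $s-\rho(B)$.

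For (a)$\Rightarrow$(b): with $A=s\Id-B$, $B\ge0$, $s>\rho(B)$, each $k$-th leading principal submatrix is $A_k=s\Id_k-B_k$ with $B_k\ge0$ a principal submatrix of $B$; since deleting rows and columns of a nonnegative matrix cannot increase the spectral radius, $\rho(B_k)\le\rho(B)<s$, whence $\det A_k=s^k\prod_i(1-\mu_i/s)$ is a product of strictly positive real factors and strictly positive conjugate-pair factors $|1-\mu/s|^2$, so $\det A_k>0$.

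For (b)$\Rightarrow$(c) I would induct on $n$; the case $n=1$ is $a_{11}>0$. Partition $A=\begin{pmatrix}A_{n-1}&b\\ c^\top&d\end{pmatrix}$, note $b\le0$ and $c\le0$ by the $Z$-property, and use the inductive hypothesis $A_{n-1}^{-1}\ge0$. The scalar Schur complement $\sigma:=d-c^\top A_{n-1}^{-1}b$ obeys $\det A=\det(A_{n-1})\,\sigma$, so (b) forces $\sigma>0$; substituting into the block-inverse formula and inspecting its four blocks --- each assembled from the nonnegative matrix $A_{n-1}^{-1}$, the nonpositive vectors $b,c$, and the positive scalar $\sigma^{-1}$ with signs that make every entry nonnegative --- gives $A^{-1}\ge0$. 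This sign bookkeeping with the block inverse is the most computational step and is where I expect the real work to lie.

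For (c)$\Rightarrow$(d) and (d)$\Rightarrow$(a): write once more $A=s\Id-B$, $B\ge0$, $s>0$. From $A^{-1}\ge0$ one gets $(\Id-B/s)^{-1}=sA^{-1}\ge0$, and the auxiliary lemma ``$M\ge0$ and $(\Id-M)^{-1}\ge0$ imply $\rho(M)<1$'' --- proved by applying $(\Id-M)^{-1}$ to a Perron eigenvector of $M$ and comparing signs --- yields $\rho(B)<s$; hence for each $\alpha\ge0$ the eigenvalues $s+\alpha-\mu$ of $A+\alpha\Id$ satisfy $|\mu|\le\rho(B)<s+\alpha$, so $A+\alpha\Id$ is nonsingular, which is (d). Conversely, (d) says $A$ has no eigenvalue in $(-\infty,0]$, so in particular the real eigenvalue $s-\rho(B)$ is positive, i.e.\ $\rho(B)<s$, and $A=s\Id-B$ exhibits $A$ as a nonsingular $M$-matrix, which is (a). (Under (d) one also sees directly that $\alpha\mapsto\det(A+\alpha\Id)$ is a polynomial that is positive for large $\alpha$ and nonvanishing on $[0,\infty)$, hence positive there, giving $\det A>0$.)
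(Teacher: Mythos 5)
This statement is imported verbatim from Berman and Plemmons and the paper supplies no proof of it, so there is nothing in the text to compare your argument against; I can only assess your outline on its own terms, and it is correct. The cycle (a)$\Rightarrow$(b)$\Rightarrow$(c)$\Rightarrow$(d)$\Rightarrow$(a) closes: the splitting $A=s\Id-B$ with $B\ge0$ is legitimate for any $s\ge\max_ia_{ii}$ precisely because $A$ is a $Z$-matrix; the monotonicity of the spectral radius under passing to principal submatrices of a nonnegative matrix gives (a)$\Rightarrow$(b); the Schur-complement induction with the block-inverse sign check (top-left $A_{n-1}^{-1}+A_{n-1}^{-1}b\,\sigma^{-1}c^\top A_{n-1}^{-1}\ge0$ since $b,c\le0$ enter in pairs, off-diagonal blocks $-A_{n-1}^{-1}b\,\sigma^{-1}\ge0$ and $-\sigma^{-1}c^\top A_{n-1}^{-1}\ge0$) gives (b)$\Rightarrow$(c); and the auxiliary lemma that $M\ge0$ together with $(\Id-M)^{-1}\ge0$ forces $\rho(M)<1$, applied to a Perron eigenvector, gives (c)$\Rightarrow$(d)$\Rightarrow$(a). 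Two small points of precision: in (d)$\Rightarrow$(a) the hypothesis only excludes \emph{real} eigenvalues of $A$ from $(-\infty,0]$, which is exactly what you need since Perron--Frobenius guarantees that the eigenvalue of minimal real part, $s-\rho(B)$, is real; and in (c)$\Rightarrow$(d) one should choose $s>0$ (e.g.\ $s=\max(1,\max_ia_{ii})$) so that the rescaling $M=B/s$ makes sense. Note also that the paper only ever uses (b), (c), (d) of the fifty equivalences in the cited theorem, so a self-contained appendix along the lines you sketch would make the paper's Section on Theorem \ref{nonosc thm} fully self-contained.
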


We start with the following auxiliary lemma.

\begin{lemma}\label{triangular Z}
A triangular $Z$-matrix $A\in\bR^{n\times n}$ with positive diagonal is an $M$-matrix.
\end{lemma}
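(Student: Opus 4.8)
The plan is to verify directly that a triangular $Z$-matrix $A\in\bR^{n\times n}$ with positive diagonal entries satisfies one of the equivalent characterizations of a nonsingular $M$-matrix from Theorem \ref{mm}. The most economical route is condition (b): all leading principal minors of $A$ are positive. Since $A$ is triangular, every leading principal submatrix of $A$ is again triangular with the same diagonal entries (in the appropriate positions), so its determinant is simply the product of the corresponding diagonal entries of $A$. Because all diagonal entries of $A$ are strictly positive, each such product is strictly positive, hence every leading principal minor is positive. By the equivalence (a)$\Leftrightarrow$(b) in Theorem \ref{mm}, $A$ is a nonsingular $M$-matrix.

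First I would note that $A$ being a $Z$-matrix is part of the hypothesis, so Theorem \ref{mm} applies without further ado. Then I would spell out the leading principal submatrix argument: for $k=1,\dots,n$, the $k\times k$ leading principal submatrix $A^{(k)}$ of a lower (resp.\ upper) triangular matrix $A$ is itself lower (resp.\ upper) triangular, with diagonal $(A_{11},\dots,A_{kk})$, so $\det A^{(k)}=\prod_{i=1}^k A_{ii}>0$ by the positivity of the diagonal. Invoking Theorem \ref{mm}(b), this establishes that $A$ is a nonsingular $M$-matrix, which is exactly the assertion.

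There is essentially no obstacle here; the statement is a bookkeeping consequence of the cited characterization, and the only thing to be careful about is to state clearly that "triangular" (either upper or lower) is what makes the leading principal submatrices triangular as well, so that their determinants factor over the diagonal. One could alternatively prove it via condition (d) by observing that $A+\alpha\Id$ is again triangular with diagonal entries $A_{ii}+\alpha>0$ and hence nonsingular, but the leading-principal-minor route is the cleanest and is the one I would write up.
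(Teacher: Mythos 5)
Your proof is correct and takes essentially the same route as the paper: both verify Theorem \ref{mm}(b) by noting that the leading principal submatrices of a triangular matrix are triangular, so each leading principal minor equals a product of positive diagonal entries.
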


\begin{proof}
Let 
$$
A=\begin{pmatrix}
  \,&a_{11}&a_{12}&\cdots&a_{1n} \,{} \\
     &0&a_{22}&\cdots&a_{2n} \,{} \\
     &\vdots&\ddots&\ddots&\vdots\,{} \\
     &0&\cdots&\cdots&a_{nn}\,{}
 \end{pmatrix}
$$
be an upper triangular $Z$-matrix with positive diagonal. Then all of its leading principle minors are positive:
$$
A_{[k]}=\prod_{i=1}^ka_{ii}>0,\mbox{ for }k\in{1,2,\dots,N}.
$$
By Theorem~\ref{mm} (b), $A$ is an $M$-matrix.
\end{proof}

\medskip

It will be convenient to define the matrices
\begin{align*}
\Phi_{ij}:=e^{-\rho|t_{i-1}-t_{j-1}|}\qquad\text{and}\qquad \Psi_{ij}:=1
\end{align*}
for $i,j=1,\dots, N+1$. Recalling that $G(t)=\lambda e^{-\rho t}+\gamma$, we then have
$\Gamma=\lambda\Phi+\gamma\Psi$ and $\Gamma_\theta=\lambda\Phi+\gamma\Psi+2\theta\Id$. Moreover, for any matrix $A$ we let
$$\wt A_{ij}:=\begin{cases}A_{ij}&\text{if $i>j$,}\\
\frac12A_{ij}&\text{if $i=j$,}\\
0&\text{otherwise.}
\end{cases}
$$
Note that this notation is consistent with \eqref{wt Gamma def}, and we get $\wt \Gamma=\lambda\wt\Phi+\gamma\wt\Phi$. We finally define
$$\wh \Phi:=\wt \Phi+\frac12\Id.$$

\begin{lemma}\label{alphaG}
For $\alpha\ge0$, the inverse of the matrix $\wh \Phi+\alpha \Phi$ is given by
$$
{\small\begin{pmatrix}
  \,&\beta&-a^{\frac1N}\mu\beta^2&-a^{\frac2N}\mu\beta^3&\cdots&-a^{\frac{N-1}{N}}\mu\beta^N&-\frac{a\alpha}{1+\alpha}\beta^N \,{} \\
     &-a^{\frac{1}{N}}\beta&(1+(1-a^{\frac{4}{N}})\alpha)\beta^2&-a^{\frac{1}{N}}\mu\nu\beta^3&\cdots&-a^{\frac{N-2}{N}}\mu\nu\beta^N&-a^{\frac{N-1}{N}}\mu\beta^{N} \,{} \\
     &0&-a^{\frac{1}{N}}\beta&(1+(1-a^{\frac{4}{N}})\alpha)\beta^2&\cdots&-a^{\frac{N-3}{N}}\mu\nu\beta^{N-1}&-a^{\frac{N-2}{N}}\mu\beta^{N-1}\,{} \\
     &0&0&\ddots&\ddots&\vdots&\vdots\,{} \\
     &\vdots&\ddots&\ddots&-a^{\frac{1}{N}}\beta&(1+(1-a^{\frac{4}{N}})\alpha)\beta^2&-a^{\frac{1}{N}}\mu\beta^2\,{} \\
     &0&\cdots&\cdots&0&-a^{\frac{1}{N}}\beta&\beta\,{}
 \end{pmatrix}},
$$
where
$$
\beta=\big(1+(1-a^{\frac2N})\alpha\big)^{-1},\qquad\mu=(1-a^{\frac{2}{N}})\alpha,\qquad\nu=(1-a^{\frac{2}{N}})(1+\alpha).
$$
\end{lemma}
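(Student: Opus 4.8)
The plan is to invert $\wh\Phi+\alpha\Phi$ by conjugating it into a bidiagonal matrix. Write $q:=a^{1/N}=e^{-\rho T/N}$, so that on the equidistant grid $\Phi_{ij}=q^{|i-j|}$. Let $S$ be the $(N+1)\times(N+1)$ matrix of the subdiagonal shift, $Se_i=e_{i+1}$ for $i\le N$ and $Se_{N+1}=0$, and put $U:=S^\top$. Then $\wt\Phi-\tfrac12\Id=\sum_{k=1}^{N}q^kS^k=qS(\Id-qS)^{-1}$, hence $\wt\Phi=\tfrac12(\Id+qS)(\Id-qS)^{-1}$ and, taking transposes, $\wt\Phi^\top=\tfrac12(\Id-qU)^{-1}(\Id+qU)$. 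Since $\Phi=\wt\Phi+\wt\Phi^\top$, we may write $\wh\Phi+\alpha\Phi=\tfrac12\Id+(1+\alpha)\wt\Phi+\alpha\wt\Phi^\top$.

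The crucial step is to check the identity
$$(\Id-qU)\,(\wh\Phi+\alpha\Phi)\,(\Id-qS)=\beta^{-1}\Id-qU+\alpha q^2 E,$$
where $\beta^{-1}=1+(1-q^2)\alpha$ and $E$ denotes the matrix whose only nonzero entry is a $1$ in position $(N+1,N+1)$. Indeed, substituting the two factorizations above makes the factors $(\Id-qS)^{-1}(\Id-qS)$ and $(\Id-qU)(\Id-qU)^{-1}$ cancel, so the left-hand side equals $\tfrac{1+\alpha}{2}(\Id-qU)(\Id+qS)+\tfrac{\alpha}{2}(\Id+qU)(\Id-qS)+\tfrac12(\Id-qU)(\Id-qS)$; expanding these three products, collecting the coefficients of $\Id$, $qS$, $qU$, $q^2US$, and using $US=\Id-E$ gives the right-hand side. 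Denote that right-hand side by $D$; it is upper triangular with all diagonal entries nonzero, hence invertible, and consequently $\wh\Phi+\alpha\Phi$ is invertible with $(\wh\Phi+\alpha\Phi)^{-1}=(\Id-qS)\,D^{-1}\,(\Id-qU)$.

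It then only remains to compute $D^{-1}$ and carry out the two outer multiplications. Writing $D=A+\alpha q^2 e_{N+1}e_{N+1}^\top$ with $A=\beta^{-1}\Id-qU$, one has $A^{-1}=\beta(\Id-q\beta U)^{-1}$, i.e.\ $(A^{-1})_{ij}=\beta^{\,j-i+1}q^{\,j-i}$ for $j\ge i$ and $0$ otherwise; since $(A^{-1})_{N+1,N+1}=\beta$ and $1+\alpha q^2\beta=\beta(1+\alpha)$, the Sherman--Morrison formula produces $D^{-1}$, which coincides with $A^{-1}$ except in its last column. The product $(\Id-qS)D^{-1}(\Id-qU)$ is now a routine telescoping computation — each entry is an entry of the previous matrix minus $q$ times a neighbouring one — in which the elementary identities $\beta-1=-\mu\beta$ and $1-\beta q^2=\beta\nu$, with $\mu=(1-q^2)\alpha$ and $\nu=(1-q^2)(1+\alpha)$, together with $q^k=a^{k/N}$, reproduce precisely the entries displayed in the statement. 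I expect the only step that is not purely mechanical to be recognizing the factorization of $\wt\Phi$ through the shift $S$ and hence the bidiagonal form of $D$; after that the argument is bookkeeping. (One could also skip the derivation entirely and simply verify that the displayed matrix $B$ satisfies $(\wh\Phi+\alpha\Phi)\,B=\Id$ by direct multiplication, the relevant geometric sums being organized in the same way.)
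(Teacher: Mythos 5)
Your proposal is correct and takes a genuinely different route from the paper's. The paper's proof of Lemma~\ref{alphaG} is a pure verification: it writes out explicit entry formulas $P_{ij}$ for the displayed matrix and $(\wh\Phi+\alpha\Phi)_{ij}$, and simply asserts that one checks $\sum_k P_{ik}(\wh\Phi+\alpha\Phi)_{kj}=\delta_{ij}$ by direct multiplication---exactly the alternative you mention parenthetically at the end. You instead \emph{derive} the inverse: using the shift $S$ you factor $\wt\Phi=\tfrac12(\Id+qS)(\Id-qS)^{-1}$, obtain $\wh\Phi+\alpha\Phi=\tfrac12\Id+(1+\alpha)\wt\Phi+\alpha\wt\Phi^\top$, and then conjugate by $(\Id-qU)$ and $(\Id-qS)$ to reduce to $\beta^{-1}\Id-qU+\alpha q^2E$, an upper-bidiagonal matrix plus a rank-one correction in the $(N+1,N+1)$ corner, which you invert via the geometric series for $(\beta^{-1}\Id-qU)^{-1}$ and Sherman--Morrison. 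I checked your key conjugation identity (the coefficients of $\Id$, $qS$, $qU$, $q^2US$ collect to $1+\alpha$, $0$, $-1$, $-\alpha$ plus the $\alpha q^2E$ term from $US=\Id-E$), the normalization $1+\alpha q^2\beta=\beta(1+\alpha)$, and spot-checked the entries in positions $(1,1)$, $(i,i)$ for interior $i$, and $(N+1,N+1)$ after the final telescoping product; they all agree with the displayed matrix. Your approach is longer to carry out but explains \emph{why} the inverse has this band-plus-decaying-fringe structure and where the $\beta,\mu,\nu$ naturally arise, whereas the paper's verification is shorter on the page but gives no insight into how one would ever have guessed the formula.
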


\begin{proof}
Let the   matrix in the statement be denoted by $P$. We rewrite $P$ as
$$
P_{ij}=\begin{cases}\beta,&\text{ if $i=j=1$ or $i=j=N+1$};\\
(1+(1-a^{\frac{4}{N}})\alpha)\beta^2,&\text{ if $i=j\in\{2,\dots,N\}$};\\
-a^{\frac1N}\beta,&\text{ if  $i-j=1$};\\
-a^{\frac{j-i}{N}}\beta^{j-i+2}\mu\nu,&\text{ if $j-i\in\{1,\dots,N-2\}$ and $i\neq1$ and $j\neq N+1$};\\
-a^{\frac {j-i}N}\beta^{j-i+1}\mu,&\text{ if $j-i\in\{1,\dots,N-1\}$ and either $i=1$ or $j=N+1$};\\
-\frac{a\alpha}{1+\alpha}\beta^N,&\text{ if $i=1$ and $j=N+1$},\\
0,&\text{ if $i\ge j+2$.}
\end{cases}
$$
On the other hand, the matrix $\wh \Phi+\alpha \Phi$ can be written as
$$
(\wh \Phi+\alpha \Phi)_{ij}=\begin{cases}1+\alpha,&\text{ if $i=j$};\\
\alpha a^{\frac{j-i}{N}},&\text{ if $i<j$};\\
(1+\alpha)a^{\frac{i-j}{N}},&\text{ if $i>j$}.
\end{cases}
$$
Checking 
$$
\sum_{k=1}^{N+1}P_{ik}(\wh \Phi+\alpha \Phi)_{kj}=\sum_{k=1}^{N+1}(\wh \Phi+\alpha \Phi)_{ik}P_{kj}=\delta_{ij}
$$
for all $i$ and $j$ completes the proof.
\end{proof}

\medskip

Let 
$$\wh \Psi:=\wt \Psi^\top-\frac12\Id.$$

\medskip

\begin{lemma}\label{lm M aux}
The matrix $\Phi^{-1}\big(\wh{\Phi}-\frac{\gamma}{\lambda}\wh{\Psi}\big)$ is a $Z$-matrix and a nonsingular $M$-matrix.
\end{lemma}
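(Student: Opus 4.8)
\medskip
\noindent\emph{Proof proposal.} The plan is to compute $A:=\Phi^{-1}\big(\wh\Phi-\tfrac\gamma\lambda\wh\Psi\big)$ explicitly enough to recognize it as an \emph{upper triangular} matrix with strictly positive diagonal and nonpositive off-diagonal entries; such a matrix is trivially a $Z$-matrix, and then Lemma~\ref{triangular Z} immediately gives that it is a nonsingular $M$-matrix. Throughout I abbreviate $r:=e^{-\rho T/N}=a^{1/N}\in(0,1)$, so that $\Phi_{ij}=r^{|i-j|}$. I will use the classical fact that the inverse of this ``geometric'' matrix is tridiagonal: $(\Phi^{-1})_{11}=(\Phi^{-1})_{N+1,N+1}=\tfrac1{1-r^2}$, $(\Phi^{-1})_{ii}=\tfrac{1+r^2}{1-r^2}$ for $1<i<N+1$, $(\Phi^{-1})_{i,i\pm1}=\tfrac{-r}{1-r^2}$, and all remaining entries zero. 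This can be checked by a three-term verification of $\Phi\,T=\Id$ for the claimed tridiagonal $T$, or recovered from Lemma~\ref{alphaG} by multiplying the stated inverse of $\wh\Phi+\alpha\Phi$ by $\alpha$ and letting $\alpha\to\infty$.

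Next I would split $A=\Phi^{-1}\wh\Phi-\tfrac\gamma\lambda\,\Phi^{-1}\wh\Psi$ and handle the two pieces separately. For the first piece I claim $\Phi^{-1}\wh\Phi$ equals the upper bidiagonal matrix $U$ with $U_{11}=1$, $U_{ii}=\tfrac1{1-r^2}$ for $2\le i\le N+1$, $U_{i-1,i}=\tfrac{-r}{1-r^2}$ for $2\le i\le N+1$, and zeros elsewhere. Rather than multiply $\Phi^{-1}$ by $\wh\Phi$, I would verify the equivalent identity $\Phi\,U=\wh\Phi$; reading off column $j\ge2$ separately on the rows $i\ge j$ and on the rows $i<j$ reduces this to the two scalar relations $U_{jj}+rU_{j-1,j}=1$ and $rU_{jj}+U_{j-1,j}=0$, whose unique solution is $U_{jj}=(1-r^2)^{-1}$, $U_{j-1,j}=-r(1-r^2)^{-1}$, while column $1$ forces $U_{11}=1$. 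For the second piece, since $\wh\Psi$ is strictly upper triangular with all entries $1$, one has $(\Phi^{-1}\wh\Psi)_{ij}=\sum_{k<j}(\Phi^{-1})_{ik}$, and tridiagonality of $\Phi^{-1}$ collapses this to a sum over $k\in\{i-1,i,i+1\}$. A short case analysis then shows $(\Phi^{-1}\wh\Psi)_{ij}=0$ for $j<i$; that the diagonal value is $0$ for $i=1$ and $\tfrac{-r}{1-r^2}$ for $i\ge2$; and that for $j>i$ the entry equals one of $\tfrac1{1+r}$, $\tfrac1{1-r^2}$, $\tfrac{1-r+r^2}{1-r^2}$ or $\tfrac{(1-r)^2}{1-r^2}$ (depending on whether $i=1$ and whether $j=i+1$) — in every case a strictly positive number.

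Combining, $A=U-\tfrac\gamma\lambda\,\Phi^{-1}\wh\Psi$ is upper triangular because both summands vanish strictly below the diagonal. Its diagonal is strictly positive: $A_{11}=1$ and $A_{ii}=\tfrac1{1-r^2}\big(1+\tfrac\gamma\lambda r\big)>0$ for $2\le i\le N+1$. And each strictly-upper entry of $A$ is $\le0$, being the sum of a nonpositive contribution from $U$ (namely $0$ or $-r(1-r^2)^{-1}$) and the term $-\tfrac\gamma\lambda$ times one of the nonnegative quantities above. Hence $A$ is a $Z$-matrix; being upper triangular with positive diagonal, it is a nonsingular $M$-matrix by Lemma~\ref{triangular Z}, which is the assertion.

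There is no genuine obstacle here; the only parts requiring care are the elementary cancellations that make the strictly-lower part of $\Phi^{-1}\wh\Phi$ vanish — all of them reduce to the identity $-r+(1+r^2)r-r^3=0$, i.e.\ to $(1-r^2)\cdot(1-r^2)^{-1}=1$ — and the bookkeeping for the first and last rows and columns, where the tridiagonal pattern of $\Phi^{-1}$ degenerates and the corner diagonal value $(1-r^2)^{-1}$, rather than $(1+r^2)(1-r^2)^{-1}$, has to be used in the sign count.
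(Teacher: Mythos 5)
Your proof is correct and follows essentially the same route as the paper: use the known tridiagonal form of $\Phi^{-1}$ to compute $\Phi^{-1}\big(\wh{\Phi}-\frac{\gamma}{\lambda}\wh{\Psi}\big)$ explicitly, observe that the result is an upper triangular $Z$-matrix with strictly positive diagonal, and conclude via Lemma~\ref{triangular Z}. The only difference is organizational: you split the product by linearity into $\Phi^{-1}\wh\Phi$ and $\Phi^{-1}\wh\Psi$ and handle each piece separately (verifying $\Phi U=\wh\Phi$ rather than multiplying out $\Phi^{-1}\wh\Phi$ directly), whereas the paper carries out the full matrix multiplication in a single step and displays the resulting matrix.
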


\begin{proof} It was shown in~\cite[Theorem 3.4]{AFS1} that
\begin{align}\label{Phi inverse formula}
\Phi^{-1}=\frac{1}{1-a^{\frac{2}{N}}}\begin{pmatrix}
  \,&{1}&{-a^{\frac{1}{N}}}&0&\cdots&\cdots&0 \,{} \\
     &{-a^{\frac{1}{N}}}&{1+a^{\frac{2}{N}}}&{-a^{\frac{1}{N}}}&0&\cdots&0\,{} \\
     &0&\ddots&\ddots&\ddots&\ddots&\vdots\,{} \\
     &\vdots&\ddots&\ddots&\ddots&\ddots&\vdots\,{} \\
     &\vdots&\ddots&\ddots&{-a^{\frac{1}{N}}}&{1+a^{\frac{2}{N}}}&{-a^{\frac{1}{N}}}\,{} \\
     &0&\cdots&\cdots&0&{-a^{\frac{1}{N}}}&{1}\,{}
 \end{pmatrix}.
\end{align}
The matrix $\wh{\Phi}-\frac{\gamma}{\lambda}\wh{\Psi}$ is equal to
$$\begin{pmatrix}
  \,&{1}&{-\frac{\gamma}{\lambda}}&-\frac{\gamma}{\lambda}&\cdots&\cdots&-\frac{\gamma}{\lambda} \,{} \\
     &{a^{\frac{1}{N}}}&{1}&{-\frac{\gamma}{\lambda}}&\cdots&\cdots&-\frac{\gamma}{\lambda}\,{} \\
     &a^{\frac{2}{N}}&\ddots&\ddots&\ddots&\ddots&\vdots\,{} \\
     &\vdots&\ddots&\ddots&\ddots&\ddots&\vdots\,{} \\
     &\vdots&\ddots&\ddots&{a^{\frac{1}{N}}}&{1}&{-\frac{\gamma}{\lambda}}\,{} \\
     &a&\cdots&\cdots&a^{\frac{2}{N}}&{a^{\frac{1}{N}}}&{1}\,{}
 \end{pmatrix}.
$$
A straightforward computation now yields that the matrix $(1-a^{\frac2N})\Phi^{-1}\big(\wh{\Phi}-\frac{\gamma}{\lambda}\wh{\Psi}\big)$ is equal to
$$
\tiny{\begin{pmatrix}
  \,&1-a^{\frac2N}&-a^{\frac1N}-\frac{\gamma}{\lambda}&-(1-a^{\frac{1}{N}})\frac{\gamma}{\lambda}&-(1-a^{\frac{1}{N}})\frac{\gamma}{\lambda}&\cdots&-(1-a^{\frac{1}{N}})\frac{\gamma}{\lambda} \,{} \\
     &0&1+a^{\frac1N}\frac{\gamma}{\lambda}&-a^{\frac1N}-(1-a^{\frac1N}+a^{\frac2N})\frac{\gamma}{\lambda}&-(1-a^{\frac{1}{N}})^2\frac{\gamma}{\lambda}&\cdots&-(1-a^{\frac{1}{N}})^2\frac{\gamma}{\lambda} \,{} \\
     &0&0&1+a^{\frac1N}\frac{\gamma}{\lambda}&-a^{\frac1N}-(1-a^{\frac1N}+a^{\frac2N})\frac{\gamma}{\lambda}&\cdots&-(1-a^{\frac{1}{N}})^2\frac{\gamma}{\lambda}\,{} \\
     &0&\ddots&\ddots&\ddots&\ddots&\vdots\,{} \\
     &\vdots&\ddots&\ddots&\ddots&\ddots&\vdots\,{} \\
     &\vdots&\ddots&\ddots&\ddots&\ddots&-(1-a^{\frac{1}{N}})^2\frac{\gamma}{\lambda}\,{} \\
     &\vdots&\ddots&\ddots&0&1+a^{\frac1N}\frac{\gamma}{\lambda}&-a^{\frac1N}-(1-a^{\frac1N}+a^{\frac2N})\frac{\gamma}{\lambda}\,{} \\
     &0&\cdots&\cdots&0&0&1+a^{\frac1N}\frac{\gamma}{\lambda}\,{}
 \end{pmatrix}},
$$
which is an upper triangular $Z$-matrix with positive diagonal. By Lemma~\ref{triangular Z}, $\Phi^{-1}\big(\wh{\Phi}-\frac{\gamma}{\lambda}\wh{\Psi}\big)$ is hence a nonsingular $M$-matrix.

\end{proof}

\begin{lemma}\label{lm M}
For $\delta\ge0$ the matrix $\Lambda_{\delta} :=\Phi^{-1}\big(\wh{\Phi}-\frac{\gamma}{\lambda}\wh{\Psi}\big)+\delta \Phi^{-1}$ is a nonsingular $M$-matrix.\end{lemma}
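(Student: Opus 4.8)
The statement to prove is Lemma \ref{lm M}: for $\delta\ge 0$, the matrix $\Lambda_\delta = \Phi^{-1}\big(\wh\Phi - \frac\gamma\lambda\wh\Psi\big) + \delta\Phi^{-1}$ is a nonsingular $M$-matrix. My plan is to reduce this to the already-established Lemma \ref{lm M aux} together with the characterization of $M$-matrices in Theorem \ref{mm}, specifically the equivalence (a)$\Leftrightarrow$(d): a $Z$-matrix $A$ is a nonsingular $M$-matrix iff $A+\alpha\Id$ is nonsingular for all $\alpha\ge 0$. The difficulty is that adding $\delta\Phi^{-1}$ is not the same as adding $\delta\Id$, and a priori it is not even obvious that $\Lambda_\delta$ is still a $Z$-matrix. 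So the first task is to control the off-diagonal signs of $\Phi^{-1}$.

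First I would examine $\Phi^{-1}$ via the explicit tridiagonal formula \eqref{Phi inverse formula} from \cite[Theorem 3.4]{AFS1}: it has positive diagonal entries (namely $1$ or $1+a^{2/N}$, divided by $1-a^{2/N}>0$) and nonpositive off-diagonal entries ($-a^{1/N}/(1-a^{2/N})\le 0$), so $\Phi^{-1}$ is itself a $Z$-matrix. Moreover $\Phi$ is positive definite (it is $\frac1\lambda$ times the $\gamma=0$ case of $\Gamma$, which is covered by \eqref{G strictly positive definite}), so $\Phi^{-1}$ is positive definite and in particular has positive diagonal; combined with the $Z$-property this already shows $\Phi^{-1}$ is a nonsingular $M$-matrix (or one can invoke Lemma \ref{triangular Z}-style reasoning, though here one uses positive leading principal minors of $\Phi^{-1}$, or just note it is tridiagonal diagonally dominant). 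Since $\Phi^{-1}(\wh\Phi-\frac\gamma\lambda\wh\Psi)$ is a $Z$-matrix by Lemma \ref{lm M aux}, and $\delta\Phi^{-1}$ has nonpositive off-diagonal entries and nonnegative diagonal entries, the sum $\Lambda_\delta$ is again a $Z$-matrix for every $\delta\ge 0$.

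Next, to show $\Lambda_\delta$ is a nonsingular $M$-matrix I use Theorem \ref{mm}(d): it suffices to prove that $\Lambda_\delta + \alpha\Id$ is nonsingular for all $\alpha\ge 0$. Write
$$
\Lambda_\delta + \alpha\Id = \Phi^{-1}\Big(\wh\Phi - \tfrac\gamma\lambda\wh\Psi + \delta\Id + \alpha\Phi\Big).
$$
Since $\Phi^{-1}$ is invertible, $\Lambda_\delta+\alpha\Id$ is nonsingular iff $M := \wh\Phi - \frac\gamma\lambda\wh\Psi + \delta\Id + \alpha\Phi$ is nonsingular. Now I would argue that $M$ is positive definite in the (not necessarily symmetric) sense of the paper. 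Indeed: $\wh\Phi = \wt\Phi + \frac12\Id$ and $\wh\Psi = \wt\Psi^\top - \frac12\Id$, so $\wh\Phi - \frac\gamma\lambda\wh\Psi = \wt\Phi - \frac\gamma\lambda\wt\Psi^\top + \frac12(1+\frac\gamma\lambda)\Id$; its symmetric part is $\frac12(\wt\Phi+\wt\Phi^\top) - \frac{\gamma}{2\lambda}(\wt\Psi+\wt\Psi^\top) + \frac12(1+\frac\gamma\lambda)\Id = \frac12\Phi - \frac\gamma{2\lambda}(\Psi+\text{diag correction}) + \cdots$, which I would simplify to show it equals $\frac1{2}\big(\Phi + \frac\gamma\lambda(\Id-\Psi)\big)$ or a similar manifestly positive-definite expression; more cleanly, one checks directly that for nonzero $\bm x$, $\bm x^\top(\wh\Phi - \frac\gamma\lambda\wh\Psi)\bm x = \frac12\bm x^\top\Phi\bm x - \frac\gamma{2\lambda}\big((\bm 1^\top\bm x)^2 - \|\bm x\|^2\big) \ge \frac12\bm x^\top\Phi\bm x - \frac{\gamma}{2\lambda}(\bm 1^\top\bm x)^2 + \cdots$ — the cleanest route is to note $\wh\Psi$ has symmetric part $\frac12(\Psi - \Id)$, which is positive semidefinite minus $\tfrac12\Id$; in any case, adding $\delta\Id$ with $\delta\ge 0$, $\alpha\Phi$ with $\Phi$ positive definite and $\alpha\ge0$, and using that $\frac12(\wh\Phi-\frac\gamma\lambda\wh\Psi + (\wh\Phi-\frac\gamma\lambda\wh\Psi)^\top)$ plus these is positive definite (this reduces precisely to the positive definiteness of $\Gamma_\theta \pm \wt\Gamma$-type matrices already handled in Lemma \ref{positive definite lemma}, up to the factor $\Phi^{-1}$ and rescaling). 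A positive definite matrix is nonsingular, so $M$ is nonsingular, hence $\Lambda_\delta+\alpha\Id$ is nonsingular for all $\alpha\ge0$, and by Theorem \ref{mm} the $Z$-matrix $\Lambda_\delta$ is a nonsingular $M$-matrix.

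**Main obstacle.** The delicate point is the positive-definiteness bookkeeping for $M = \wh\Phi - \frac\gamma\lambda\wh\Psi + \delta\Id + \alpha\Phi$: one must correctly identify the symmetric part of $\wh\Phi - \frac\gamma\lambda\wh\Psi$ and verify it is (at least) positive semidefinite so that adding $\delta\Id$ (with $\delta\ge0$) and $\alpha\Phi$ (positive definite, $\alpha\ge0$) yields a positive definite matrix. Since $\wh\Phi = \wt\Phi+\frac12\Id$ and $\wt\Phi+\wt\Phi^\top = \Phi$, while $\wh\Psi^\top+\wh\Psi = (\wt\Psi+\wt\Psi^\top) - \Id = \Psi+\Id-\Id=\Psi$ wait — one must be careful: $\wt\Psi+\wt\Psi^\top = \Psi + \Id$ (diagonal counted twice as $\frac12+\frac12$ gives $1$, hmm actually $\wt\Psi$ has $\frac12$ on the diagonal so $\wt\Psi+\wt\Psi^\top$ has $1$ on the diagonal and $1$ off-diagonal, i.e. equals $\Psi$), so $\wh\Psi+\wh\Psi^\top = \Psi - \Id$, which is positive semidefinite since $\Psi = \bm1\bm1^\top\succeq 0$ only after subtracting $\Id$ — in fact $\Psi-\Id$ has eigenvalue $N$ (once) and $-1$ ($N$ times), so it is \emph{not} positive semidefinite. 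This means the sign of $\gamma$ matters and one genuinely needs the $+\alpha\Phi$ and the structure: the safe and honest route is to observe that $\Lambda_\delta + \alpha\Id$ nonsingular is equivalent to nonsingularity of $\Phi M' $ where after multiplying back by $\Phi$ on the left one recovers an expression of the form $\wh\Phi - \frac\gamma\lambda\wh\Psi + \delta\Id + \alpha\Phi$, and then relate this, after pre/post-multiplication, to the positive definite matrix $\Gamma_\theta - \wt\Gamma$ (or its transpose) with a suitable choice of $\theta$ corresponding to $\delta$; Lemma \ref{positive definite lemma} then gives nonsingularity directly. Getting this identification exactly right — matching $\wh\Phi$, $\wh\Psi$, the factor $\lambda$, and the roles of $\delta,\alpha,\theta$ — is where the real work lies.
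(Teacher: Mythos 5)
Your proposal starts out aligned with the paper's strategy---reduce via Theorem~\ref{mm}(d) to the nonsingularity of $\Lambda_\delta+\alpha\Id$, observe that $\Lambda_\delta$ is a $Z$-matrix, and factor $\Lambda_\delta+\alpha\Id=\Phi^{-1}\big(\wh\Phi-\tfrac\gamma\lambda\wh\Psi+\delta\Id+\alpha\Phi\big)$---but the positive-definiteness route you then take for the factor $M:=\wh\Phi-\tfrac\gamma\lambda\wh\Psi+\delta\Id+\alpha\Phi$ genuinely fails, and you discover this yourself mid-paragraph. The symmetric part of $M$ is $\big(\tfrac12+\alpha\big)\Phi-\tfrac\gamma{2\lambda}\Psi+\big(\tfrac12+\tfrac\gamma{2\lambda}+\delta\big)\Id$, and since $\Psi=\bm1\bm1^\top$ has the single large eigenvalue $N+1$ along $\bm1$ while $\bm1^\top\Phi\bm1$ grows like $c(N+1)^2$ with $c<1$, the quadratic form at $\bm x=\bm1$ becomes negative for fixed $\gamma>0$ and $N$ large. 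So $M$ is not positive definite in the (non-symmetric) sense used in the paper, and nonsingularity cannot be obtained this way. Your proposed fallback---matching $M$ to a matrix of the form $\Gamma_\theta\pm\wt\Gamma$---also cannot work as stated, because $M$ contains the variable term $\alpha\Phi$, which is not a multiple of the identity and hence cannot be absorbed into the parameter $\theta$ (which only shifts $\Id$). You flag this honestly as ``where the real work lies,'' which is exactly right: the gap is not merely bookkeeping.

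The paper's actual proof sidesteps positive definiteness entirely and instead builds the inverse of $\Lambda_\delta+\alpha\Id$ explicitly. It defines $Q:=(\wh\Phi+\alpha\Phi)^{-1}\big(\delta\Id-\tfrac\gamma\lambda\wh\Psi\big)$, shows that $(\wh\Phi+\alpha\Phi)^{-1}$ is a diagonally dominant $Z$-matrix (using the closed form from Lemma~\ref{alphaG} together with $(\wh\Phi+\alpha\Phi)^{-1}\bm1=(\alpha\Id+\Phi^{-1}\wh\Phi)^{-1}\Phi^{-1}\bm1>0$, which relies on the $\gamma=0$ case of Lemma~\ref{lm M aux}), deduces that $Q$ is a $Z$-matrix, then verifies $Q^{-1}\ge0$ by explicitly inverting the triangular matrix $\delta\Id-\tfrac\gamma\lambda\wh\Psi$; Theorem~\ref{mm}(c) then gives that $Q$, and hence $\Id+Q$, is a nonsingular $M$-matrix, and the identity $(\Id+Q)^{-1}(\wh\Phi+\alpha\Phi)^{-1}\Phi=(\Lambda_\delta+\alpha\Id)^{-1}$ finishes. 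This machinery---diagonal dominance, the explicit resolvent of $\wh\Phi+\alpha\Phi$, the triangular inversion---is precisely what is needed to handle the troublesome $\gamma\Psi$ contribution that defeats the positive-definiteness argument. Your proposal as written therefore does not constitute a proof; it identifies the right target (nonsingularity of $\Lambda_\delta+\alpha\Id$) but supplies no working mechanism to reach it when $\gamma>0$.
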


\begin{proof}For $\delta=0$ the result follows from Lemma~\ref{lm M aux}. So let us assume henceforth that $\delta>0$. Note first that $\Lambda_{\delta}$ is a $Z$-matrix since both $\Phi^{-1}\big(\wh{\Phi}-\frac{\gamma}{\lambda}\wh{\Psi}\big)$ and $\Phi^{-1}$ are $Z$ matrices by 
 Lemma~\ref{lm M aux}
and \eqref{Phi inverse formula}, respectively. Hence condition (d) of Theorem~\ref{mm} will imply that  $\Lambda_{\delta}$ is a nonsingular $M$-matrix as soon as we can show that  $\Lambda_{\delta}+\alpha\Id$ is invertible for all $\alpha\ge0$.
  
 In a first step, we note  that taking $\gamma=0$ in Lemma~\ref{lm M aux} yields that $\Phi^{-1}\wh \Phi$ is a nonsingular $M$-matrix. Hence $(\alpha\Id+\Phi^{-1}\wh \Phi)^{-1}\ge0$ for all $\alpha\ge0$. 
 It follows that 
$$
\big(\wh \Phi+\alpha \Phi\big)^{-1}\bm 1=\big(\Id+(\alpha \Phi)^{-1}\wh \Phi\big)^{-1}(\alpha \Phi)^{-1}\bm 1=\Big(\alpha\Id+\Phi^{-1}\wh \Phi\Big)^{-1}\Phi^{-1}\bm 1>0,
$$
because by~\cite[Example 3.5]{AFS1},
\begin{align}\label{Phi-11ge0}
{{\Phi}}^{-1}{\bm 1}=\frac{1}{1+a^{\frac1N}}\Big(1,1-a^{\frac1N},\dots,1-a^{\frac1N},1\Big)^T\gg0.
\end{align}
 Since moreover $(\wh \Phi+\alpha \Phi)^{-1}$ is a $Z$-matrix by Lemma~\ref{alphaG}, it follows that $(\wh \Phi+\alpha \Phi)^{-1}$ is a diagonally dominant $Z$-matrix for all $\alpha\ge0$.

In the next step, we show that the matrix 
$$
Q:=(\wh \Phi+\alpha \Phi)^{-1}\Big(\delta\Id-\frac{\gamma}{\lambda}\wh \Psi\Big)
$$
is a  $Z$-matrix. Denoting again $P:=(\wh \Phi+\alpha \Phi)^{-1}$, we get
$$
Q_{ij}=
\delta P_{ij}-\frac{\gamma}{\lambda}\sum_{k=1}^{j-1}P_{ik},
$$
with the convention that $\sum_{k=1}^0a_k=0$.
It follows that $Q_{ii}\ge0$ for all $i$, because 
$P_{ii}\ge0$ and $\frac{\gamma}{\lambda}\sum_{k=1}^{i-1}P_{ik}\le0$ by the fact that $P$ is a $Z$-matrix.
Since $P$ is  diagonally dominant, we have $\sum_{k=1}^{j-1}P_{ik}\ge0$ for any $j>i$ and hence $Q_{ij}=\delta P_{ij}-\frac\gamma\lambda\sum_{k=1}^{j-1}P_{ik}\le0$ for $j>i$. Using the fact that $P_{ik}=0$ for $ k\le i-1$, we get that for $j<i$
$$
Q_{ij}=
\delta P_{ij}-\frac{\gamma}{\lambda}\sum_{k=1}^{j-1}P_{ik}=\delta P_{ij}\le0.
$$
This shows that $Q$ is a $Z$-matrix.

We show next that $Q$ is a nonsingular $M$-matrix. To this end, we note first that  the triangular matrix $\Big(\delta\Id-\frac{\gamma}{\lambda}\wh \Psi\Big)$ is invertible under our assumption $\delta>0$. As a matter of fact, an easy calculation verifies that its inverse is given by
$$
\frac{1}{\delta}\begin{pmatrix}
  \,&1&\sigma&\sigma(1+\sigma)&\cdots&\sigma(1+\sigma)^{N-2}&\sigma(1+\sigma)^{N-1} \,{} \\
     &0&1&\sigma&\sigma(1+\sigma)&\cdots&\sigma(1+\sigma)^{N-2} \,{} \\
     &0&0&1&\sigma&\cdots&\sigma(1+\sigma)^{N-3}\,{} \\
     &0&\ddots&\ddots&\ddots&\ddots&\vdots\,{} \\
     &\vdots&\ddots&\ddots&\ddots&1&\sigma\,{} \\
     &0&\cdots&\cdots&\cdots&0&1\,{}
 \end{pmatrix}\ge0,
$$
where $\sigma:=\frac{\gamma}{\lambda\delta}>0$. Hence,
$$
Q^{-1}=\Big(\delta\Id-\frac{\gamma}{\lambda}\wh \Psi\Big)^{-1}(\wh \Phi+\alpha \Phi)\ge0.
$$
So Theorem~\ref{mm} (c)  {shows} that $Q$ is a nonsingular $M$-matrix.

For the final step, we note first that Theorem~\ref{mm}  (d) implies that 
$
\Id+Q$
is a nonsingular $M$-matrix. In particular, $(\Id+Q)^{-1}$ exists, and so we can define the matrix 
\begin{align*}
 (\Id+Q)^{-1}(\wh \Phi+\alpha \Phi)^{-1}\Phi&= \Big(\delta\Id+\wh \Phi+\alpha \Phi-\frac{\gamma}{\lambda}\wh \Psi\Big)^{-1}\Phi\\
&=\Big(\Id+(\delta \Phi^{-1})^{-1}\Big(\Phi^{-1}\Big(\wh{\Phi}-\frac{\gamma}{\lambda}\wh{\Psi}\Big)+\alpha\Id\Big)\Big)^{-1}(\delta \Phi^{-1})^{-1}\\
&=\Big(\delta \Phi^{-1}+\Phi^{-1}\Big(\wh{\Phi}-\frac{\gamma}{\lambda}\wh{\Psi}\Big)+\alpha\Id\Big)^{-1}\\
&=(\Lambda_\delta +\alpha\Id)^{-1}.
\end{align*}
This proves that $\Lambda_\delta +\alpha\Id$ is invertible and the proof is complete.\end{proof}

\medskip

\begin{lemma}\label{Psi inverse lemma}
Let $A$ be an invertible matrix and suppose that $\alpha\in\bR$ is such that $A+\alpha\Psi$ is invertible. Then the vector $A^{-1}\bm1$ is proportional to $(A+\alpha\Psi)^{-1}\bm1$.
\end{lemma}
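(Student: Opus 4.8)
The plan is to use the fact that $\Psi$ is the matrix all of whose entries equal $1$, i.e.\ $\Psi=\bm 1\bm 1^\top$, so that $A+\alpha\Psi$ is nothing but a rank-one perturbation of $A$. This reduces everything to one short computation; I would avoid quoting the Sherman--Morrison formula and instead argue directly.

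First I would set $\bm x:=A^{-1}\bm 1$. Since $A$ is invertible and $\bm 1\neq\bm 0$, we have $A\bm x=\bm 1\neq\bm 0$ and hence $\bm x\neq\bm 0$. The crucial observation is that for any vector $\bm z$ one has $\Psi\bm z=\bm 1(\bm 1^\top\bm z)=(\bm 1^\top\bm z)\bm 1$, so that
$$(A+\alpha\Psi)\bm x=A\bm x+\alpha(\bm 1^\top\bm x)\bm 1=\bigl(1+\alpha\,\bm 1^\top\bm x\bigr)\bm 1 .$$
Writing $c:=1+\alpha\,\bm 1^\top\bm x$, this says $(A+\alpha\Psi)\bm x=c\,\bm 1$.

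It remains to conclude. I would first note that $c\neq0$: if $c=0$, then $(A+\alpha\Psi)\bm x=\bm 0$ while $\bm x\neq\bm 0$, contradicting the hypothesis that $A+\alpha\Psi$ is invertible. Therefore $(A+\alpha\Psi)(\bm x/c)=\bm 1$, and applying $(A+\alpha\Psi)^{-1}$ gives
$$(A+\alpha\Psi)^{-1}\bm 1=\frac1c\,\bm x=\frac{1}{1+\alpha\,\bm 1^\top A^{-1}\bm 1}\,A^{-1}\bm 1 ,$$
which exhibits $A^{-1}\bm 1$ and $(A+\alpha\Psi)^{-1}\bm 1$ as proportional, with an explicit proportionality constant. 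There is essentially no obstacle in this argument; the only point that genuinely uses a hypothesis is the non-vanishing of $c$, and this is immediate from the invertibility of $A+\alpha\Psi$ (it is also precisely the Sherman--Morrison criterion for invertibility of a rank-one update of $A$).
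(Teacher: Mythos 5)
Your proof is correct and follows essentially the same route as the paper: both hinge on the observation that $\Psi\bm z$ is always a scalar multiple of $\bm 1$, giving $(A+\alpha\Psi)A^{-1}\bm 1 = (1+\alpha\,\bm 1^\top A^{-1}\bm 1)\bm 1$, and then one applies $(A+\alpha\Psi)^{-1}$. The one thing you do in addition is explicitly verify that the proportionality constant $c$ is nonzero, a point the paper's proof passes over silently but which is indeed needed to invert the relation.
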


\begin{proof} Note that $\Psi\bm x$ is proportional to $\bm1$ for any vector $\bm x$. Hence, 
\begin{align*}
(A+\alpha\Psi)A^{-1}\bm1&=(\Id+\alpha \Psi A^{-1})\bm1=(1+\beta)\bm 1
\end{align*}
for some constant $\beta$. Applying $(A+\alpha\Psi)^{-1}$ to both sides of this equation yields the result.
\end{proof}

\medskip

We are now ready to prove  the remaining implication of Theorem~\ref{nonosc thm}. 

\medskip

\begin{proof}[Proof of {\rm (b)}$\Rightarrow${\rm (c)} in Theorem~\ref{nonosc thm}] We need to show that $\bm v$ has only nonnegative components for $\theta\ge\frac{\lambda+\gamma}{4}$. The vector $\bm v$ is proportional to 
$(\Gamma_\theta+\wt\Gamma)^{-1}\bm1$.
When setting
$$\delta:=\frac{4\theta-(\lambda+\gamma)}{2\lambda}\ge0,
$$
we find that
\begin{align*}
\Gamma_\theta+\wt\Gamma-\gamma\Psi&=\lambda\Phi+\gamma\Psi+2\theta\Id+\lambda\wt \Phi+\gamma\wt \Psi-\gamma\Psi\\
&=\lambda\Phi+2\theta\Id+\lambda\wt \Phi-\gamma\wt \Psi^\top=\lambda\Phi+\lambda\Big(\wh \Phi-\frac{\gamma}{\lambda}\wh \Psi+\delta\Id\Big)=\lambda\Phi\big(\Lambda_\delta+\Id\big),
\end{align*}
and we know   from Lemma~\ref{lm M} that the latter matrix is invertible. 
It therefore follows from Lemma~\ref{Psi inverse lemma} that $\bm v$ is proportional to
\begin{align*}
\big(\Phi(\Lambda_\delta+\Id)\big)^{-1}\bm1&=(\Lambda_\delta+\Id)^{-1}\Phi^{-1}\bm 1.
\end{align*}
As noted in \eqref{Phi-11ge0}, we have $\Phi^{-1}\bm 1\gg0$. Moreover, $\Lambda_\delta$, and hence $\Lambda_\delta+\Id$, are nonsingular $M$-matrices by Lemma~\ref{lm M} and Theorem~\ref{mm} (d). Via Theorem~\ref{mm} (c), these facts imply that  $\big(\Phi(\Lambda_\delta+\Id)\big)^{-1}\bm1\ge0$ and in turn that $\bm v\ge0$.\end{proof}

\section{Conclusion and outlook}\label{Conclusion section}

We have considered a Nash equilibrium for two competing agents in a market impact model with general transient price impact. We have seen that without transaction costs both agents engage in a \lq\lq hot-potato game", which has some similarities to certain events  during the Flash Crash that have been reported in~\cite{SEC,Kirilenko}. 
We have then analyzed the behavior of equilibrium strategies as functions of transaction costs, $\theta$, and trading frequency, $N$. In 
Theorem~\ref{nonosc thm} we have determined the critical value of transaction costs at which the equilibrium strategies $\bm v$ and $\bm w$ become buy-only or sell-only. In Section~\ref{transaction tax section}, numerical simulations  have shown that expected costs can be increasing in the trading frequency for small $\theta$, while they generally decrease for sufficiently large $\theta$. We have also seen that the expected costs of both agents can be lower with additional transaction costs than without. These observations provide some support for the common claim that additional  transaction costs can, at least under certain circumstances such as during a fire sale, have a calming effect on financial markets. 

\bigskip

\noindent{\bf Acknowledgement:} The authors thank Ria Grindel, Elias Strehle, and an anonymous referee for  comments that helped to improve  previous versions of the manuscript.

\bibliography{MarketImpact}{}
\bibliographystyle{abbrv}

\end{document}